\documentclass[12pt,reqno]{amsart}
\usepackage{amssymb}
\usepackage{amsmath}
\usepackage{amsmath}
\usepackage{amsthm}
\usepackage{amsfonts}
\usepackage{graphicx}
\usepackage{enumerate}
\usepackage{bm}
\usepackage{bbm} 
\usepackage{graphicx}
\usepackage{dsfont}
\usepackage[authoryear,longnamesfirst]{natbib}
\usepackage{fullpage}
\usepackage{subcaption}
\linespread{1.75}
 \usepackage{xr}
\usepackage{multirow}

\setcounter{MaxMatrixCols}{10}

\usepackage{multicol}
\usepackage{tikz}

\numberwithin{equation}{section}
\newtheorem{theorem}{Theorem}

\newtheorem{corollary}{Corollary}

\newtheorem{definition}{Definition}

\newtheorem{lemma}{Lemma}
\theoremstyle{definition}
\newtheorem{assumption}{Assumption}

\newtheorem{remark}{Remark}

\newcommand{\supp}{\mathrm{supp}}
\renewcommand{\tilde}{\widetilde}

\newcommand{\Var}[0]{\mathrm{Var}}

\newcommand{\calC}[0]{\mathcal{C}}
\newcommand{\calF}[0]{\mathcal{F}}
\newcommand{\calG}[0]{\mathcal{G}}
\newcommand{\calH}[0]{\mathcal{H}}

\newcommand{\calS}[0]{\mathcal{S}}

\newcommand{\calX}[0]{\mathcal{X}}
\newcommand{\calY}[0]{\mathcal{Y}}

\newcommand{\E}[0]{\mathbb{E}}

\newcommand{\R}[0]{\mathbb{R}}

\newcommand{\N}{\mathbb{N}}

\newcommand{\sumi}{\sum_{i=1}^n }

\newcommand{\Real}{\mathbbm R}

\newcommand{\Op}{O_P}

\newcommand{\op}{o_P}

\renewcommand{\hat}{\widehat}
\newcommand{\Ink}{I_{n}}
\newcommand{\ijI}{{(i,j)\in I_{n}}}
\newcommand{\Xij}{{X_{ij}}}

\newcommand{\sg}{{\text{sg}}}


\allowdisplaybreaks
\begin{document}

\title{Empirical likelihood and uniform convergence rates for dyadic kernel density estimation}
\thanks{We thank Co-Editor Jianqing Fan, the Associate Editor, and three anonymous referees for their constructive comments that have helped us significantly improved the paper. We are indebted to Matias Cattaneo, the anonymous AE, and two anonymous referees for pointing out an error in the earlier version and for suggesting a fix. We benefited from useful comments and discussions with Bruce Hansen, Yuya Sasaki, and the participants at Bristol Econometrics Study Group. We thank Yukitoshi Matsushita for sharing their MATLAB simulation code. 
All remaining errors are ours.}
\author[H. D. Chiang]{Harold D. Chiang}
\author[B. Y. Tan]{Bing Yang Tan}

\date{First arXiv version: 16 Oct 2020. This version: \today}

\address[H. D. Chiang]{
Department of Economics, University of Wisconsin-Madison\\
 William H. Sewell Social Science Building, 1180 Observatory Drive,
Madison, WI 53706, USA.}
\email{hdchiang@wisc.edu}

\address[B. Y. Tan]{
Global Asia Institute, National University of Singapore \\
Block S17, Level 3, \#03-01 10 Lower Kent Ridge Road \\
Singapore 11907}
\email{tanby91@nus.edu.sg}

\begin{abstract}
This paper studies the asymptotic properties of and alternative inference methods for kernel density estimation (KDE) for dyadic data. We first establish uniform convergence rates for dyadic KDE. Secondly,  we propose a modified jackknife empirical likelihood procedure for inference. The proposed test statistic is asymptotically pivotal regardless of presence of dyadic clustering. The results are further extended to cover the practically relevant case of incomplete dyadic data.
Simulations show that this modified jackknife empirical likelihood-based inference procedure delivers precise coverage probabilities even with modest sample sizes and with incomplete dyadic data. Finally, we illustrate the method by studying airport congestion in the United States.

\medskip

\noindent {\bf Keywords:} density estimation, dyadic network data,  empirical likelihood, jackknife

\medskip
\end{abstract}

\maketitle

\newpage
\section{Introduction}
Consider a weighted undirected random graph consisting of vertices $i\in\{1,...,n\}$ and edges consisting of real-valued random variables $(X_{ij})_{i,j\in\{1,...n\}, i< j}$. Suppose $X_{ij}$ are identically distributed with density $f(\cdot)=f_{X_{12}}(\cdot)$. The dyadic kernel density estimator (KDE) of \cite{GrahamNiuPowell2019}, evaluated at a design point $x\in \supp(X_{12})$, is defined as
\begin{align*}
\hat f_{n}(x)= {n\choose 2}^{-1}\sum_{1\le i< j \le n}\frac{1}{h_n}K\left(\frac{x-X_{ij}}{h_n}\right).
\end{align*}
\par
Applications of dyadic data are numerous in different fields of the sciences and social sciences, such as bilateral trade, animal migrations, refugee diasporas, transportation networks between different cities, friendship networks between individuals, intermediate product sales between firms, research and development partnerships across different organizations, electricity grids across different states, and social networks between legislators, to list a few. See, e.g. \cite{Graham2019Handbook} for a recent and comprehensive review. This paper investigates the asymptotic properties of and proposes new, improved inference methods for dyadic KDE with both complete and incomplete data.\\
\par
The main contributions of this paper are two-fold. First, we establish uniform convergence rates for dyadic KDE under general conditions. These rates are new; no uniform rate is previously  known for KDE under dyadic sampling in the literature. Second, we develop inference methods by adapting a modified version of the jackknife empirical likelihood (JEL) proposed by \cite{MatsushitaOtsu2020}. The proposed modified JEL (mJEL) procedure is shown to be asymptotically valid regardless of the presence of dyadic clustering. In practice, dyadic clustered data are often incomplete or contain missing values. We extend our modified JEL procedure to cover the practically relevant case of incomplete data under the missing at random assumption.
Extensive simulation studies show that this modified JEL inference procedure does not suffer from the unreliable finite sample performance of the analytic variance estimator, delivers precise coverage probabilities even under modest sample sizes, and is robust against both dyadic clustering and i.i.d. sampling for both complete and incomplete dyadic data.  Finally, we illustrate the method by studying the distribution of congestion across different flight routes using data from the United States Department of Transportation (US DOT).\\
\par
In an important recent work, \cite{GrahamNiuPowell2019} first propose and examine a
nonparametric density estimator for dyadic random variables.
Focusing on pointwise asymptotic behaviors, they demonstrate that asymptotic normality at a fixed design point can be established at a parametric rate with respect to number of vertices, an unique feature of dyadic KDE. This implies that its asymptotics are robust to a range of bandwidth rates. They also point out that degeneracy that leads to a non-Gaussian limit, such as the situation pointed out in \cite{Menzel2020}, does not occur for dyadic KDE under some mild bandwidth conditions. For inference, they propose an analytic variance estimator that corresponds to the one proposed by \cite{FafchampsGubert2007} (henceforth FG) under parametric models and establish its asymptotic validity. { 
In their simulation, they discover that the FG variance estimator often leads to imprecise coverage with moderate sample sizes. 
Our proposed modified JEL procedure for dyadic KDE complements the findings of \cite{GrahamNiuPowell2019} by providing an alternative that has improved finite sample performance.\\
\par

}
 In practice, the presence of missing edges is an important feature of network datasets, and a researcher working with network data often observes incomplete sets of edges. To our knowledge, none of the existing theoretical work in the literature that focuses on the asymptotics of dyadic KDE has tackled this issue. Under the missing at random assumption, we extend our theory for modified JEL to cover this practically relevant situation. 
\par
Ever since the seminal papers of \cite{Owen1988,Owen1990}, empirical likelihood has been extensively studied in the literature.  For a textbook treatment of classical theory, see \cite{Owen200Book}.  Some recent developments under conventional asymptotics include \cite{HjortMcKeagueVan_Keilegom2009AoS} and \cite{BravoEscancianoVan_Keilegom2020AoS}, and many more. See \cite{chen2009review} for a recent review. Jackknife empirical likelihood was first proposed in the seminal work of
 \cite{JingYuanZhang2009} for parametric $U$-statistics and has since been studied by \cite{GongPengQi2010JMA}, \cite{PengOiVan_Keilegom2012}, \cite{Peng2012CanJS}, \cite{WangPengQi2013Sinica}, \cite{ZhangZhao2013JMA}, \cite{MatsushitaOtsu2018JER} and \cite{ChenTabri2020}, to list a few, for different applications.  Most of the aforementioned works focus on actual $U$-statistics under i.i.d. sampling.  \cite{MatsushitaOtsu2020} demonstrate that JEL can not only be modified for certain unconventional asymptotics of i.i.d. observations, but can also be applied to study edge probabilities for both sparse and dense dyadic networks studied in \cite{BickelChenLevina2011}. This is achieved by incorporating the bias-correction idea of \cite{Hinkley1978}, \cite{EfronStein1981}. In a context where controlling bias and spikes - from the presence of the bandwidth in kernel estimation - presents extra challenges, our modified JEL for dyadic KDE provides a nonparametric counterpart to the network asymptotic results in \cite{MatsushitaOtsu2020}. \\
\par
{ 
Theoretically speaking, the modified JEL is closely related to the growing literature that utilizes
the idea of accounting for the contributions from both linear and quadratic terms in the Hoeffding-type decomposition of the statistics. 
This idea emerges in the literature that studies
 ``small-bandwidth asymptotics," in the context of density-weighted average derivative by \cite{,CattaneoCrumpJansson2014bootstrap,CattaneoCrumpJansson2014ET} and \cite{CattaneoCrumpJansson2013JASA}, for the sake of robustness over a wider  range of bandwidths. This idea has since been explored in various other contexts such as an inference problem in partially linear models with many regressors in \cite{cattaneo2018alternative,cattaneo2018inference}, as well as the various other examples explored in \cite{MatsushitaOtsu2020}.}\\ 
 \par
Our results also complement the uniform convergence rates results for kernel type estimators under different dependence settings studied in \cite{EinmahlMason2000}, \cite{EinmahlMason2005}, \cite{DonyMason2008}, \cite{Hansen2008ET}, \cite{Kristensen2009ET} and so forth. Theoretically, the $\sqrt{n}$-uniform rate of dyadic KDE share common underlying features with the kernel density estimator of \cite{EscancianoJacho-Chavez2012}, as well as the residual distribution estimator proposed in \cite{AkritasVan_Keilegom2001}, both under i.i.d. sampling.\\   

\par
{ 
In a seminal work, \cite{FafchampsGubert2007} proposed a dyadic robust estimator for regression models, which has since become the benchmark for inference under dyadic clustering. Related works include \cite{FrankSnijders1994} and \cite{SnijdersBorgatti1999}, \cite{AronowSamiiAssenova2015} \cite{CameronMiller2014} in different contexts.
Asymptotics for statistics of dyadic random arrays are studied by \cite{DDG2020} under a general empirical processes setting and by \cite{ChiangKatoSasaki2020} in high-dimensions. The former show the validity of a modified pigeonhole bootstrap (cf. \cite{Mccullagh2000} and \cite{Owen2007}) while the latter utilize a multiplier bootstrap, both under a non-degeneracy condition. On the other hand, for two-way separately exchangeable arrays and linear test-statistics, \cite{Menzel2020} proposed a conservative bootstrap that is valid uniformly even under potentially non-Gaussian degeneracy scenarios. 
In this paper, we only need to focus on the non-degenerate and Gaussian degenerate cases since the non-Gaussian degeneracy scenario is not a concern for dyadic KDE; \cite{GrahamNiuPowell2019} pointed out that non-Gaussian degeneracy can be ruled out under mild bandwidth conditions.\\
\par
More recently, \cite{cattaneo2022} study uniform convergence rate and uniform inference for dyadic KDE. Their uniform convergence rate results further improved upon the result by allowing uniformity over the whole support for compactly supported
data and enabling boundary-adaptive estimation. They further derive the minimax rate of uniform convergence for density estimation with dyadic
	data and show that the dyadic KDE is, under appropriate conditions, minimax-optimal. For  inference, they obtain uniform distribution theory and provide bias-corrected $t$-statistics-based uniform confidence bands. Their methods differ from the ones we propose, which are based on jackknife empirical likelihood.}

\subsection{Notation and Organization}
Let $|A|$ be the cardinality of a finite set $A$. Denote the uniform distribution on $[0,1]$ as $U[0,1]$.  For $g:\calS\to \R$ and $\calX\subseteq \calS$, denote $\|g\|_{\calX}=\sup_{x\in\calX}|g(x)|$.  Write $a_n\lesssim b_n$ if there exists a constant $C>0$ independent of $n$ such that $a_n\le Cb_n$. Throughout the paper, the asymptotics should be understood as taking $n\to \infty$.\\
\par
The rest of this paper is organized as follows. In Section \ref{sec:model_estimator}, we introduce our model and the dyadic KDE estimator. Our main theoretical results of uniform convergence rates and the validity of different JEL statistics for both complete and incomplete dyadic random arrays are then introduced in Section \ref{sec:theory}. Section \ref{sec:simulation} contains the simulation studies while the empirical application can be found in Section \ref{sec:application}. A practical guideline on bandwidth choice and proofs of all the theoretical results are contained in the Supplementary Appendix.


\section{Model and estimator}\label{sec:model_estimator}
Consider a weighted undirected random graph.
Let $I_{n} = \{ (i,j) : 1 \le i < j \le n\}$ and $I_{\infty} = \bigcup_{n=2}^{\infty} I_{n}$. The random graph consists of an array $( X_{ij})_{(i,j) \in I_{\infty}}$ of real-valued random variables that is generated by
\begin{align}
X_{ij} = \mathfrak f (U_i,U_j ,U_{\{i,j\}}),\quad  (U_i)_{i\in\N}, \:(U_{\{i,j\}})_{i,j\in \N}  \stackrel{i.i.d.}{\sim} U[0,1]\label{eq:Aldous-Hoover}
\end{align}
for some Borel measurable map $\mathfrak{f}:[0,1]^3\to \R$ that is symmetric in the first two arguments. Note that $U_i$ and $U_{\{\iota,\jmath\}}$ are independent for all $i\in\N$, $(\iota,\jmath)\in I_{\infty}$. While (\ref{eq:Aldous-Hoover}) may seem to be a specific structural assumption, it is implied by the following low-level condition of joint exchangeability via the Aldous-Hoover-Kallenberg representation, see e.g. \citet[Theorem 7.22]{Kallenberg2006}.
\begin{definition}[Joint exchangeability]\label{a:JE}
	An $(X_{ij})_{(i,j) \in I_{\infty}}$ is said to be jointly exchangeable if 
	for any permutation $\pi$ of $\N$, the arrays $(X_{ij})_{(i,j) \in I_{\infty}}$ and $(X_{\pi(i)\pi(j)})_{(i,j) \in I_{\infty}}$ are identically distributed.
\end{definition}

\begin{remark}[Identical distribution]
	Under this setting, $(X_{ij})_{(i,j)\in I_{\infty}}$ are identically distributed.
\end{remark}

Suppose that the researcher observes $(X_{ij})_{(i,j)\in I_{n}}$ for $n\ge 2$. The object of interest is the density function $f(\cdot)=f_{X_{12}}(\cdot)$ of $X_{12}$. We assume such a density function exists and is well-defined although some low-level sufficient conditions can be obtained. 
For a fixed design point $x\in\supp(X_{12})$ of interest, the dyadic KDE for $\theta=f(x)$ is defined as
\begin{align}
\hat \theta=\hat f_{n}(x)= {n\choose 2}^{-1}\sum_{i=1}^{n-1}\sum_{j= i+1}^n\frac{1}{h_n}K\left(\frac{x-X_{ij}}{h_n}\right)=\frac{1}{|\Ink|}\sum_{\ijI} K_{ij,n},\label{eq:KDE}
\end{align}
where $h_n>0$ is a $n$-dependent bandwidth, $K_{ij,n}:=h_n^{-1}K((x-X_{ij})/h_n)$ and $K(\cdot)$ is a kernel function. We will be more specific about the requirements for $h_n$ and $K$ in the following Section.

\section{Theoretical results}\label{sec:theory}
\subsection{Uniform convergence rates}
 { In the rest of the paper, denote $f_{X_{12}|U_1}$ for the conditional density function of $X_{12}$ given $U_1$ and $f_{X_{12}|U_1,U_2}$ for the conditional density function of $X_{12}$ given $U_1,U_2$. In addition, define the shorthand notations $f'(\cdot)=\partial f(\cdot)/\partial x$, $f''(\cdot)=\partial^2 f(\cdot)/\partial x^2$, $f_{X_{12}|U_1}'(\cdot|u)=\partial f_{X_{12}|U_1}(\cdot|u)/\partial x$, $f_{X_{12}|U_1}''(\cdot|u)=\partial^2 f_{X_{12}|U_1}(\cdot|u)/\partial x^2$. Similarly, $f_{X_{12}|U_1,U_2}'(\cdot|u_1,u_2)=\partial f_{X_{12}|U_1,U_2}(\cdot|u_1,u_2)/\partial x$, and $f_{X_{12}|U_1,U_2}''(\cdot|u_1,u_2)=\partial^2 f_{X_{12}|U_1,U_2}(\cdot|u_1,u_2)/\partial x^2$.} We first study the uniform convergence rates of dyadic KDE under the following assumptions.
\begin{assumption}[Uniform convergence rates]\label{a:rates}
	Suppose that { $\calX \subsetneq \supp(X_{12})$} is a compact interval, and suppose the following are satisfied:
\begin{enumerate}[(i)]
	\item We observe $(X_{ij})_{(i,j)\in I_n}$, a subset of $( X_{ij})_{(i,j) \in I_{\infty}}$ that is generated following (\ref{eq:Aldous-Hoover}) with $\Var(X_{12})>0$.
	\item In an open neighborhood of $\calX$, $f(\cdot)$ exists and is at least twice continuously differentiable and $\|f''\|_\calX =O(1)$, $f_{X_{12}|U_1}(\cdot|U)$ exists and is almost surely at least twice continuously differentiable, $\|f_{X_{12}|U_1}''\|_{\calX\times [0,1]} =O(1)$, $f_{X_{12}|U_1,U_2}(\cdot|U_1,U_2)$ exists and is almost surely continuously differentiable and $\|f_{X_{12}|U_1,U_2}'\|_{\calX\times [0,1]^2} =O(1)$.
	\item The kernel function $K$ is symmetric, right (or left) continuous, non-negative, of bounded variation, and satisfies $\|K\|_\infty=O(1)$, $\int K(u)du=1$, { $\int uK(u)du=0$}, and $0<\int u^2 K(u)du<\infty$.
	\item The sequence of bandwidths satisfies $h_n\to 0$ and { $nh_n\to \infty$}. 
\end{enumerate}
\end{assumption}

{  The support condition in Assumption \ref{a:rates} restricts the scope to be on a compact interval $\mathcal X$ that is strictly contained in the support of $X_{12}$. Assumption \ref{a:rates}(i) requires the observations to be generated following the dyadic structure discussed in Section \ref{sec:model_estimator}. Assumption \ref{a:rates}(ii) assumes existence of the conditional densities conditional on vertex-specific latent shocks and imposes standard smoothness conditions on the unknown density and conditional densities. Assumption \ref{a:rates}(iii) assumes a smooth second-order kernel. Finally, Assumption \ref{a:rates}(iv) requires the sequence of bandwidths to be converging to zero in an adequate range. 
}

\begin{theorem}[Uniform convergence rates for dyadic KDE]\label{thm:unif_rate}
	Suppose Assumption \ref{a:rates} holds, then with probability at least $1-o(1)$,
	\begin{align*}
	\sup_{x\in \calX}\left|\hat f_n(x)-f(x)\right|\lesssim h_n^2
	+\frac{1}{n^{1/2}}
	\end{align*}
	if  $\inf_{x\in\calX}\Var\left(f_{X_{12}|U_1}(x|U_1)\right)\ge\underline L>0$ for some $\underline L>0$ (called the non-degenerate case). Otherwise (called the degenerate case), with probability at least $1-o(1)$,
	\begin{align*}
	\sup_{x\in \calX}\left|\hat f_n(x)-f(x)\right|\lesssim h_n^2 +\frac{1}{n}
	+\sqrt{\frac{\log(1/h_n)}{n^2h_n}}.
	\end{align*}
\end{theorem}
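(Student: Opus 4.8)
The plan is to base the argument on a H\'ajek--Hoeffding-type decomposition of $\hat f_n(x)$ adapted to the jointly exchangeable array. Write $K_{ij,n}=K_{ij,n}(x)$, regarded as a function $g_n(U_i,U_j,U_{\{i,j\}};x)$ symmetric in its first two arguments, and set
\begin{align*}
L_n(u;x)&=\E\big[K_{12,n}(x)\mid U_1=u\big]-\E K_{12,n}(x),\\
g_{2,n}(u_1,u_2;x)&=\E\big[K_{12,n}(x)\mid U_1=u_1,U_2=u_2\big]-L_n(u_1;x)-L_n(u_2;x)-\E K_{12,n}(x),\\
R_{ij,n}(x)&=K_{ij,n}(x)-\E\big[K_{ij,n}(x)\mid U_i,U_j\big].
\end{align*}
Then, with $B_n(x)=\E K_{12,n}(x)-f(x)$ nonrandom,
\begin{align*}
\hat f_n(x)-f(x) &= B_n(x) + \frac{2}{n}\sum_{i=1}^n L_n(U_i;x)\\
&\quad + \frac{1}{|\Ink|}\sum_{\ijI} g_{2,n}(U_i,U_j;x) + \frac{1}{|\Ink|}\sum_{\ijI} R_{ij,n}(x).
\end{align*}
The two structural facts that drive everything are: (a) $\E[g_{2,n}(U_1,U_2;x)\mid U_1]=0$, so the third term is a completely degenerate second-order $U$-statistic; and (b) $\E[R_{ij,n}(x)\mid U_i,U_j]=0$, which makes $R_{ij,n}(x)$ and $R_{k\ell,n}(x)$ uncorrelated even when $\{i,j\}\cap\{k,\ell\}\ne\emptyset$ (condition on the shared vertex) and, crucially, makes the $R_{ij,n}(x)$ mutually independent conditionally on $(U_1,\dots,U_n)$, since each depends on the array only through its own edge shock $U_{\{i,j\}}$.

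First I would dispatch the nonrandom term: a second-order Taylor expansion of $f$ around $x$, together with the symmetry and moment conditions on $K$ in Assumption \ref{a:rates}(iii), the bound $\|f''\|_{\calX}=O(1)$, and a routine truncation to control the tail of $K$ (using that $f$ is a density and $K$ is of bounded variation and integrable), yields $\sup_{x\in\calX}|B_n(x)|\lesssim h_n^2$. The same smoothing computation applied to the conditional densities gives, uniformly over $x\in\calX$ and $(u,u_1,u_2)\in[0,1]^2$,
\begin{align*}
L_n(u;x)=f_{X_{12}|U_1}(x|u)-f(x)+O(h_n^2),\qquad g_{2,n}(u_1,u_2;x)=\psi(u_1,u_2;x)+o(h_n),
\end{align*}
where $\psi(u_1,u_2;x)=f_{X_{12}|U_1,U_2}(x|u_1,u_2)-f_{X_{12}|U_1}(x|u_1)-f_{X_{12}|U_1}(x|u_2)+f(x)$; the $O(h_n^2)$ rate uses $\|f_{X_{12}|U_1}''\|_{\calX\times[0,1]}=O(1)$ and $\int uK(u)\,du=0$, while the $o(h_n)$ rate uses continuity of $f_{X_{12}|U_1,U_2}'$ together with $\int uK(u)\,du=0$. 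These approximations reduce the stochastic part to quantities free of the bandwidth except for the spiky remainder.

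For the linear term, in the non-degenerate case $\frac2n\sum_i L_n(U_i;x)=\frac2n\sum_i\{f_{X_{12}|U_1}(x|U_i)-f(x)\}+O(h_n^2)$, and since $\{u\mapsto f_{X_{12}|U_1}(x|u):x\in\calX\}$ is a uniformly bounded, uniformly Lipschitz (hence VC-type) class of functions of a single $U[0,1]$ variable, a standard i.i.d.\ empirical-process maximal inequality gives $\sup_{x\in\calX}\big|\frac2n\sum_i\{f_{X_{12}|U_1}(x|U_i)-f(x)\}\big|=\Op(n^{-1/2})$; this is the dominant stochastic term. In the degenerate case $f_{X_{12}|U_1}(x|U_1)=f(x)$ almost surely for every $x\in\calX$, hence $L_n(u;x)=O(h_n^2)$ uniformly and the linear term is absorbed into the $h_n^2$ bound. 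The quadratic term is a completely degenerate $U$-process whose kernel equals the bounded Lipschitz function $\psi$ up to a uniform $o(h_n)$ correction, so a maximal inequality for degenerate $U$-processes indexed by the one-dimensional compact $\calX$ (discretize $\calX$, apply a Bernstein/Hoeffding bound for degenerate $U$-statistics on the grid, control oscillation by Lipschitzness) gives $\sup_{x\in\calX}\big|\frac{1}{|\Ink|}\sum_{\ijI} g_{2,n}(U_i,U_j;x)\big|=\Op(n^{-1})$, the residual logarithmic factor being dominated since $n^2h_n$ grows much faster than $\log(1/h_n)$. This term is $\op(n^{-1/2})$, hence negligible in the non-degenerate case but a leading term in the degenerate case.

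The heart of the proof --- and the step I expect to be the main obstacle --- is the idiosyncratic term $R_n(x):=\frac{1}{|\Ink|}\sum_{\ijI}R_{ij,n}(x)$, which produces the $\sqrt{\log(1/h_n)/(n^2h_n)}$ rate and is the only term not vanishing faster than $n^{-1}$. I would condition on $(U_1,\dots,U_n)$, so that the summands become independent, mean-zero, with conditional envelope $O(h_n^{-1})$ and average conditional variance $O(h_n^{-1})$ (by the substitution $\int K(v)^2 f_{X_{12}|U_1,U_2}(x-h_nv\mid\cdot)\,dv$, using $\|K\|_\infty=O(1)$, the bounded variation of $K$, and boundedness of the conditional densities near $\calX$). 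Since $K$ is of bounded variation, the class $\{K_{ij,n}(\,\cdot\,;x):x\in\calX\}$ is VC-type with $L^2$-covering numbers of order $(1/h_n)(1/\epsilon)^{O(1)}$, so a conditional Talagrand-type maximal inequality gives $\E[\sup_{x\in\calX}|R_n(x)|\mid (U_i)_{i=1}^n]\lesssim \sqrt{\log(1/h_n)/(n^2h_n)}+\log(1/h_n)/(n^2h_n)\lesssim\sqrt{\log(1/h_n)/(n^2h_n)}$; after showing the random conditional-variance factor concentrates at $O(h_n^{-1})$ uniformly over $\calX$ and un-conditioning via Markov's inequality, one gets $\sup_{x\in\calX}|R_n(x)|=\Op\big(\sqrt{\log(1/h_n)/(n^2h_n)}\big)$. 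Collecting the four pieces then yields $\sup_{\calX}|\hat f_n-f|\lesssim h_n^2+n^{-1/2}$ in the non-degenerate case (the $n^{-1}$ and $\sqrt{\log(1/h_n)/(n^2h_n)}$ contributions being of lower order under Assumption \ref{a:rates}(iv)) and $\sup_{\calX}|\hat f_n-f|\lesssim h_n^2+n^{-1}+\sqrt{\log(1/h_n)/(n^2h_n)}$ in the degenerate case. The genuine difficulty in this last step is that $R_n$ is simultaneously spiky (sup-norm $\sim h_n^{-1}$) and not a sum of pairwise-independent terms; the resolution rests on the edge-wise conditional independence in (b), the bounded-variation entropy bound for translates of $K$, and careful bookkeeping to produce $\log(1/h_n)$ rather than $\log n$ in the final rate.
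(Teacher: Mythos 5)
Your proposal is correct and follows essentially the same route as the paper's proof: the identical Hoeffding-type decomposition into bias, linear, completely degenerate quadratic, and conditionally independent idiosyncratic terms, with the same term-by-term treatment (an i.i.d.\ empirical-process maximal inequality over a VC-type class for the linear term, a degenerate $U$-process maximal inequality for the quadratic term, and conditioning on $(U_i)_{i=1}^n$ plus a Talagrand-type bound with variance proxy $O(h_n)$ for the $R_{ij}$ term, which is exactly how the paper obtains the $\sqrt{\log(1/h_n)/(n^2h_n)}$ rate). The only cosmetic difference is that you sketch the degenerate $U$-process bound by discretization and Bernstein-type inequalities where the paper invokes an off-the-shelf result of Chen and Kato, and you use Lipschitzness of the conditional densities where the paper runs the covering-number argument through the conditional-expectation entropy lemma of Ghosal, Sen, and van der Vaart.
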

A proof can be found in Section D.1 of the Supplementary Appendix.
\begin{remark}[Convergence rates]
	{ 	Theorem \ref{thm:unif_rate} shows the uniform convergence rates under both non-degenerate and degenerate scenarios. 
	In both cases, the bias is $O(h_n^2)$. In case of degeneracy, the uniform rate coincides with the independent sampling situation with sample size $2{n\choose 2}$. { Besides the bias term}, the component $\sqrt{\frac{\log(1/h_n)}{n^2h_n}}$ is the part in which the kernel plays a role (in adding the factor of $\frac{\log(1/h_n)}{h_n}$). In the non-degenerate case, the stochastic component is $O_p(n^{-1/2})$ because its leading term consists of a sample average of $\{\E[h_n^{-1 }K((x-X_{ij})/h_n)|U_i]\}_{i=1}^n$, and these elements are well-behaved under our assumptions.
	 These results agree with the observations made in \cite{GrahamNiuPowell2019} for the pointwise behaviors of the dyadic KDE. In a related paper, \cite{graham2021minimax} consider nonparametric regression with dyadic data  and establish uniform convergence rates for a Nadaraya-Watson type estimator under a different setting. More explicitly, their rates are nonparametric rates since the regressors considered are vertex-specific. 
 
}
\end{remark}

\begin{remark}[Recent improvement on the uniform convergence rates]\label{rem:recent}
 The recent paper of \cite{cattaneo2022} further improved upon the result by obtaining uniform convergence rates for dyadic KDE under a more general setting. Specifically, their uniformity is over the whole support for
compactly supported
data by carefully addressing the boundary issues, while Theorem \ref{thm:unif_rate} is established on a compact interval strictly contained in the support of $X_{12}$.
  They further show that dyadic KDE can be minimax optimal under some regularity conditions. 
\end{remark}


\subsection{Inference for complete dyadic data with JEL and modified JEL}
We now introduce JEL-based inference procedures for dyadic KDE.
Denote the leave-one-out and leave-two-out index sets $I_n^{(i)}=\{(\iota,\jmath)\in I_n: \iota,\jmath\not\in \{i\}\}$, $I_n^{(i,j)}=\{(\iota,\jmath)\in I_n: \iota,\jmath\not\in\{ i,j\}\}$. 
{  For a given $\theta$, define
\begin{align*}
&\hat \theta={n\choose 2}^{-1}\sum_\ijI \frac{1}{h_n}K\left(\frac{x-\Xij}{h_n}\right),\quad \hat \theta^{(i)}={n-1\choose 2}^{-1}\sum_{(k,l)\in \Ink^{(i)}}\frac{1}{h_n} K\left(\frac{x-X_{kl}}{h_n}\right), \: i=1,...,n,\\
&\hat \theta^{(i,j)}={n-2\choose 2}^{-1}\sum_{(k,l)\in \Ink^{(i,j)}}\frac{1}{h_n} K\left(\frac{x-X_{kl}}{h_n}\right),\quad  1\le i<j\le n,
\end{align*}
and subsequently $S(\theta)=\hat \theta-\theta$, $S^{(i)}(\theta)=\hat \theta^{(i)}-\theta$, and $S^{(i,j)}(\theta)=\hat \theta^{(i,j)}-\theta$.
}
Furthermore, define the pseudo true value for JEL as
\begin{align*}
V_i(\theta)=nS(\theta)-(n-1) S^{(i)}(\theta).
\end{align*}
For modified JEL, define
\begin{align}
&Q_{ij}=\frac{n-3}{n-1}\left[nS(\theta)-(n-1)\{S^{(i)}(\theta)+S^{(j)}(\theta)\}+(n-2)S^{(i,j)}(\theta)\right],\quad i<j,\nonumber\\
&\Gamma^2 = \frac{1}{n}\sumi V_i(\theta)^2,\qquad  \Gamma_m^2= \frac{1}{n}\sumi V_i(\hat \theta)^2 - \frac{1}{n}\sum_{i=1}^{n-1}\sum_{j = i+1}^n Q_{ij}^2 \label{eq:jack_var}.
\end{align}
For each value of $\theta$, define the pseudo true value for modified JEL by
\begin{align*}
 V_i^m(\theta)=V_i(\hat\theta)-\Gamma \Gamma_m^{-1}\{V_i(\hat\theta)-V_i(\theta)\}.
\end{align*}
Now, define the modified JEL function for $\theta$ by
\begin{align*}
\ell^m (\theta)&=-2\sup_{w_1,...,w_n} \sumi \log(n w_i),\\
&\text{s.t.}\quad w_i\ge 0, \quad \sumi w_i=1,\quad \sumi w_i V_i^m (\theta)=0,
\end{align*}
and $\ell(\theta)$ is defined analogously with $V_i^m$ replaced by $V_i$.
The Lagrangian dual problems are
\begin{align*}
\ell(\theta)=2\sup_\lambda \sumi \log( 1+\lambda V_i(\theta) ),\qquad \ell^m(\theta)=2\sup_\lambda \sumi \log( 1+\lambda V^m_i(\theta) ).
\end{align*}

{  We say $f$ is non-degenerate at $x$ if $\Var\left(f_{X_{12}|U_1}(x|U_1)\right)\ge\underline L>0$, which means that the vertex-specific shock $U_1$ affects the conditional density $f_{X_{12}|U_1}$. Before stating the next result, which characterizes the asymptotics of JEL and modified JEL for inference, we make the following assumptions.} 
\begin{assumption}[JEL {and modified JEL} for complete data]\label{a:JEL_complete}
	Suppose that
	\begin{enumerate}[(i)]
		\item $f(x)>0$ at the design point of interest $x$ {  which lies in $\calX\subsetneq\supp(X_{12})$}.
		\item $K$ has bounded support, $nh_n^2\to \infty$ and {  $nh_n^{5/2}\to 0$}.
	\end{enumerate}
\end{assumption}
{ 
Assumption \ref{a:JEL_complete} requires the design point $x$ to be an interior point of the support of $X_{12}$.  Assumption \ref{a:JEL_complete} (ii) imposes constraints on the convergence rate of the sequence of bandwidths. Note that the lower bound $nh_n^2\to \infty $ is a sufficient condition for Lemma 2 in the appendix, which is subsequently used for bounding the linearization errors for the JEL functions $\ell$ and $\ell^m$. This is specific to empirical likelihood-based methods and is not required by procedures such as the Wald test with FG variance estimator proposed in \cite{GrahamNiuPowell2019}. The restriction $nh_n^{5/2}\to 0$ here is used only in the degenerate case and can be relaxed to $nh_n^4\to 0$ in the non-degenerate case. 
 Note that the bandwidth conditions accommodate the MSE optimal rate of $h_n=O(n^{-2/5})$ if the researcher is only concerned with the non-degenerate case.}

\begin{theorem}[Wilks' theorem for {  JEL and} modified JEL  for dyadic KDE with complete data]\label{thm:asymptotic_dist}
	Suppose Assumptions \ref{a:rates} and \ref{a:JEL_complete} are satisfied, then
	\begin{align*}
	\ell^m(\theta)\stackrel{d}{\to} \chi_1^2.
	\end{align*}
	In addition, { 
	\begin{align*}
	\ell(\theta)\stackrel{d}{\to}\begin{cases}
\chi_1^2, \text{ if  $f$ is non-degenerate at $x$,}\\
\frac{1}{2}\chi_1^2, \text{ if  $f$ is degenerate at $x$.}
	\end{cases} 
	\end{align*}}
\end{theorem}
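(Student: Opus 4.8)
The plan is to reduce each statistic to a self-normalized ratio $n S(\theta)^2/(\text{variance estimator}) + \op(1)$ and then identify the limiting variance in the non-degenerate and degenerate regimes. First I would record the exact algebraic identities: $\tfrac1n\sum_i V_i(\eta)=S(\eta)$ for every $\eta$ (so $\tfrac1n\sum_i V_i(\theta)=S(\theta)$ and $\tfrac1n\sum_i V_i(\hat\theta)=0$), and $V_i(\theta)-V_i(\hat\theta)=S(\theta)$, which yields $V_i^m(\theta)=V_i(\theta)+(\Gamma\Gamma_m^{-1}-1)S(\theta)$, hence $\tfrac1n\sum_i V_i^m(\theta)=\Gamma\Gamma_m^{-1}S(\theta)$ and $\tfrac1n\sum_i V_i^m(\theta)^2=\Gamma^2+(\Gamma^2\Gamma_m^{-2}-1)S(\theta)^2$. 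Then, invoking the appendix lemma that bounds the empirical-likelihood linearization errors (this is where $nh_n^2\to\infty$ is needed, to control $\max_i|V_i(\theta)|$ and $\max_i|V_i^m(\theta)|$ against the kernel spike $\|K_{ij,n}\|_\infty\asymp h_n^{-1}$), the usual Owen expansion gives $\ell(\theta)=(\sum_i V_i(\theta))^2/\sum_i V_i(\theta)^2+\op(1)$ and $\ell^m(\theta)=(\sum_i V_i^m(\theta))^2/\sum_i V_i^m(\theta)^2+\op(1)$. Substituting the identities, the first reduces to $nS(\theta)^2/\Gamma^2+\op(1)$, and in the second the factor $\Gamma^2$ cancels between numerator and denominator, leaving $nS(\theta)^2/\Gamma_m^2+\op(1)$ — the corrections proportional to $S(\theta)^2$ being of strictly smaller order than $\Gamma^2,\Gamma_m^2$ in both regimes.

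Next I would develop the Hájek–Hoeffding decomposition $K_{ij,n}=\E[K_{12,n}]+g_{1,n}(U_i)+g_{1,n}(U_j)+\tilde g_{2,n}(U_i,U_j,U_{\{i,j\}})$, with $g_{1,n}(u)=\E[K_{12,n}\mid U_1=u]-\E[K_{12,n}]$ and $\tilde g_{2,n}$ the completely degenerate remainder. Under Assumption \ref{a:rates}, kernel smoothing gives $\E[K_{12,n}]=f(x)+O(h_n^2)$, $g_{1,n}(u)=f_{X_{12}|U_1}(x\mid u)-f(x)+O(h_n^2)$ (so $\Var(g_{1,n}(U_1))\to\Var(f_{X_{12}|U_1}(x\mid U_1))$, which is bounded below in the non-degenerate case and $O(h_n^2)$ in the degenerate case), and $\E[\tilde g_{2,n}^2]=h_n^{-1}f(x)\int K(u)^2du+o(h_n^{-1})$ with $\E[\tilde g_{2,n}^4]=O(h_n^{-3})$. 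Propagating this through the linear jackknife operations, $S(\theta)=b_n+L_n+Q_n$ with $b_n=O(h_n^2)$, $L_n=\tfrac2n\sum_i g_{1,n}(U_i)$, $Q_n=\binom n2^{-1}\sum_{i<j}\tilde g_{2,n}$; crucially the bias and linear parts cancel in both the first- and second-order jackknives, so $V_i(\theta)=b_n+2g_{1,n}(U_i)+\widetilde W_i+(\text{l.o.t.})$ with $\widetilde W_i=\tfrac2{n-1}\sum_{l\ne i}\tilde g_{2,n}(U_i,U_l,\cdot)$, and $Q_{ij}=\tfrac{2(n-3)}{(n-1)^2}\tilde g_{2,n}(U_i,U_j,\cdot)+(\text{l.o.t.})$.

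Then comes the variance-estimator accounting. Using the decomposition together with laws of large numbers for dyadic arrays (whose relative fluctuations vanish because $nh_n\to\infty$, even though $\E\tilde g_{2,n}^2\asymp h_n^{-1}$): in the non-degenerate case $\widetilde W_i=\Op((nh_n)^{-1/2})=\op(1)$ and $\tfrac1n\sum_{i<j}Q_{ij}^2=\Op((nh_n)^{-1})=\op(1)$, so $\Gamma^2\to 4\Var(f_{X_{12}|U_1}(x\mid U_1))=:\sigma^2>0$ and $\Gamma_m^2\to\sigma^2$, while $n\Var(\hat\theta)\to\sigma^2$. In the degenerate case $g_{1,n}=O(h_n^2)$ is dominated by $\widetilde W_i$ (this uses $nh_n^5\to0$), so $\Gamma^2=\tfrac1n\sum_i\widetilde W_i^2+\op(\cdot)\sim\tfrac4n\E[\tilde g_{2,n}^2]$, whereas $n\Var(\hat\theta)=n\Var(Q_n)+\op(\cdot)\sim\tfrac2n\E[\tilde g_{2,n}^2]$ — a spurious factor of two — and $\tfrac1n\sum_{i<j}Q_{ij}^2\sim\tfrac2n\E[\tilde g_{2,n}^2]$, so $\Gamma_m^2=\Gamma^2-\tfrac1n\sum_{i<j}Q_{ij}^2\sim\tfrac2n\E[\tilde g_{2,n}^2]\sim n\Var(\hat\theta)$. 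In summary $\Gamma_m^2/\{n\Var(\hat\theta)\}\to1$ in both regimes, but $\Gamma^2/\{n\Var(\hat\theta)\}\to1$ (non-degenerate) resp.\ $\to2$ (degenerate). Finally, the pointwise distributional theory of \cite{GrahamNiuPowell2019} (non-Gaussian degeneracy is excluded since the spike makes $\Var(\tilde g_{2,n})$ diverge) gives $\sqrt n S(\theta)/\sigma\to N(0,1)$ in the non-degenerate case once $\sqrt n b_n\to0$, which holds under $nh_n^4\to0$ (implied by $nh_n^{5/2}\to0$), and $S(\theta)/\mathrm{sd}(Q_n)\to N(0,1)$ in the degenerate case, since $b_n/\mathrm{sd}(Q_n)\asymp nh_n^{5/2}\to0$ and $L_n/\mathrm{sd}(Q_n)\asymp\sqrt n h_n^{5/2}\to0$. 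Either way $nS(\theta)^2/\{n\Var(\hat\theta)\}\to\chi_1^2$, so by Slutsky $\ell^m(\theta)=\tfrac{nS(\theta)^2}{n\Var(\hat\theta)}\cdot\tfrac{n\Var(\hat\theta)}{\Gamma_m^2}+\op(1)\to\chi_1^2$, while $\ell(\theta)\to\chi_1^2$ (non-degenerate) or $\tfrac12\chi_1^2$ (degenerate).

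The main obstacle, in my view, is the interaction between the kernel spike $\asymp h_n^{-1}$ and the bandwidth restrictions. Two points need care: (i) verifying the empirical-likelihood linearization for $\ell^m$, i.e.\ that $\max_i|V_i^m(\theta)|=\op(n^{1/2}\Gamma_m)$ despite the spike — this is precisely why $nh_n^2\to\infty$ is assumed and is the content of the appendix linearization lemma; and (ii) establishing the laws of large numbers for $\tfrac1n\sum_i\widetilde W_i^2$ and $\tfrac1n\sum_{i<j}Q_{ij}^2$ when the second and fourth moments of $\tilde g_{2,n}$ diverge, and — most importantly — pinning down the exact constant $2$ in the degenerate case so that $\tfrac1n\sum_{i<j}Q_{ij}^2$ subtracts precisely the excess in $\Gamma^2$. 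That constant is exactly what makes $\ell^m$ asymptotically pivotal in both regimes while $\ell$ inherits the $\tfrac12$.
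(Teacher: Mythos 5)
Your proposal is correct and follows essentially the same route as the paper's proof: the Owen-type linearization controlled by $\max_i|V_i(\theta)|$ (the paper's Lemmas 1--3, relying on $nh_n^2\to\infty$), the Hoeffding decomposition of $S(\theta)$ with the CLT imported from \cite{GrahamNiuPowell2019}, and the variance accounting $\Gamma^2\approx 4\Omega_1(x)+4(nh_n)^{-1}\Omega_2(x)$ versus $n\Var(\hat\theta)\approx 4\Omega_1(x)+2(nh_n)^{-1}\Omega_2(x)$ with $\tfrac1n\sum_{i<j}Q_{ij}^2\approx 2(nh_n)^{-1}\Omega_2(x)$ subtracting exactly the excess (the paper's Lemma 4 and the $Q_{ij}$ computation in its proof of Theorem 2), yielding the factor $\tfrac12$ for $\ell$ under degeneracy and pivotality of $\ell^m$ in both regimes. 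The only cosmetic differences are that you fold the paper's $W_{ij}$ and $R_{ij}$ terms into a single completely degenerate remainder and cancel $\Gamma^2$ in the ratio upfront rather than tracking the $\Gamma\Gamma_m^{-1}$ factor through the numerator and denominator separately.
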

A proof can be found in Section D.2 of the Supplementary Appendix.

\begin{remark}[Asymptotic pivotality]\label{rem:pivotal}
Theorem \ref{thm:asymptotic_dist} shows that the modified JEL is pivotal regardless of whether $f$ is degenerate or not, while JEL is asymptotically pivotal only if $f$ is non-degenerate at $x$.  Theorem \ref{thm:asymptotic_dist} implies that one can construct an approximate $1-\alpha$ confidence interval { 
\begin{align*}
\mathcal R_\alpha=\{\theta: \ell^\bullet(\theta)\le c_\alpha \}
\end{align*} 
for $\ell^\bullet\in\{\ell,\ell^m\}$},
where $c_\alpha$ is such that $\lim_{n\to \infty}P(\theta\in \mathcal R_\alpha)=P(\chi_1^2\le c_\alpha)=1-\alpha$.
\end{remark}

{ 
\begin{remark}[Conservatism of JEL]\label{rem:JEL}
An implication of Theorem \ref{thm:asymptotic_dist} is that under degeneracy, JEL is asymptotically conservative. Thus the tests and confidence intervals based on JEL are, although not always asymptotically precise, still asymptotically valid. On the other hand, although the mJEL is asymptotically precise, simulations Section \ref{sec:simulation} shows that it is likely oversized when sample size is small. As such, JEL is still a practical option with its simpler implementation, especially if one prefers to be more conservative with the size control. 
\end{remark}
}

{ 
	\begin{remark}[Uniform inference]\label{rem:uniform_inference}
	Theorem \ref{thm:asymptotic_dist} focuses on pointwise inference. In \cite{cattaneo2022}, the authors pioneer uniform inference using strong approximation and boundary-adaptive kernels. They further develop a robust bias-correction procedure based on the bias-correction idea of \cite{calonico2018effect}. Their results are based on $t$-statistics. How to generalize a empirical likelihood type procedure to provide uniform inference for dyadic KDE remains an open question.  
	\end{remark}
}

\begin{remark}[Modified jackknife variance estimator]\label{rem:jack_var}
	A direct implication of the proof of Theorem \ref{thm:asymptotic_dist} is the consistency of $\Gamma_m^2$ in (\ref{eq:jack_var}), a bias-corrected
	jackknife variance estimator of \cite{EfronStein1981} adapted to our context.
	Based on this variance estimator, one can construct an alternative  approximate $1-\alpha$ confidence interval
 for $\theta$ as $\left[\hat{\theta}\pm  {  n^{-1/2}}z_{\alpha/2}\Gamma_m\right]$, where $z_{\alpha/2}$ is the $(1-\alpha/2)$-th quantile of the standard normal random variable. This is formalized in the following corollary.
\end{remark}
\begin{corollary}[Asymptotic normality with modified jackknife variance estimator]\label{thm:jk_var}
	Suppose Assumptions \ref{a:rates} and \ref{a:JEL_complete} are satisfied, then
\begin{align*}
\sqrt{n}\Gamma_m^{-1}(\hat\theta - \theta)\stackrel{d}{\to} N(0,1).
\end{align*}
\end{corollary}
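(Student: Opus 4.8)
The plan is to deduce the Corollary from the limit theory already behind Theorem~\ref{thm:asymptotic_dist}, via an exact reorganization of the modified pseudo-values. The proof of Theorem~\ref{thm:asymptotic_dist} establishes, en route to $\ell^m(\theta)\stackrel{d}{\to}\chi_1^2$ and through the usual empirical-likelihood expansion (plus Assumption~\ref{a:JEL_complete}(ii), in particular $nh_n^{5/2}\to0$, which renders the $O(\sqrt n\,h_n^2)$ bias negligible), the signed self-normalized central limit theorem
\[
T_n:=\frac{n^{-1/2}\sum_{i=1}^{n}V_i^m(\theta)}{\bigl(n^{-1}\sum_{i=1}^{n}V_i^m(\theta)^2\bigr)^{1/2}}\stackrel{d}{\to}N(0,1).
\]
So it suffices to show $T_n=\sqrt n\,\Gamma_m^{-1}(\hat\theta-\theta)\,(1+o_p(1))$, after which the claim follows by Slutsky's theorem.

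First I would record two exact algebraic identities. Since $V_i(\theta)=n\hat\theta-(n-1)\hat\theta^{(i)}-\theta$, the $\theta$-dependence is a pure location shift, so $V_i(\hat\theta)-V_i(\theta)=\theta-\hat\theta$ for every $i$; and because $\sum_{i=1}^{n}\hat\theta^{(i)}=n\hat\theta$ (the leave-one-out identity for the order-two U-statistic $\hat\theta$), $\sum_{i=1}^{n}V_i(\hat\theta)=(n-1)\sum_i(\hat\theta-\hat\theta^{(i)})=0$. Substituting $V_i^m(\theta)=V_i(\hat\theta)+\Gamma\Gamma_m^{-1}(\hat\theta-\theta)$ and using these two facts (the cross term in the square vanishes because $\sum_i V_i(\hat\theta)=0$), together with $n^{-1}\sum_i V_i(\hat\theta)^2=\Gamma^2-(\hat\theta-\theta)^2$ (immediate from the definition of $\Gamma^2$ and the same cancellation), gives
\[
n^{-1/2}\sum_{i=1}^{n}V_i^m(\theta)=\sqrt n\,\Gamma\Gamma_m^{-1}(\hat\theta-\theta),\qquad
n^{-1}\sum_{i=1}^{n}V_i^m(\theta)^2=\Gamma^2+(\hat\theta-\theta)^2\bigl(\Gamma^2\Gamma_m^{-2}-1\bigr),
\]
hence $T_n=\sqrt n\,\Gamma_m^{-1}(\hat\theta-\theta)\cdot\Gamma\bigl[\Gamma^2+(\hat\theta-\theta)^2(\Gamma^2\Gamma_m^{-2}-1)\bigr]^{-1/2}$.

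It then remains to check that $\Gamma^2$ and $\Gamma_m^2$ are of the same stochastic order and that $(\hat\theta-\theta)^2=o_p(\Gamma_m^2)$; granting these the bracket equals $\Gamma^2(1+o_p(1))$, so $T_n=\sqrt n\,\Gamma_m^{-1}(\hat\theta-\theta)(1+o_p(1))$ and Slutsky finishes the proof. Here $\Gamma^2\ge\Gamma_m^2$ holds pathwise, since $\Gamma^2-\Gamma_m^2=(\hat\theta-\theta)^2+\sum_{i<j}Q_{ij}^2\ge0$; the matching upper bound $\Gamma^2-\Gamma_m^2=O_p(\Gamma_m^2)$, together with $(\hat\theta-\theta)^2=o_p(\Gamma_m^2)$, follows from the consistency of the bias-corrected jackknife variance estimator $\Gamma_m^2$ that is proved within Theorem~\ref{thm:asymptotic_dist} (Remark~\ref{rem:jack_var}): $\Gamma_m^2$ is consistent for $\sigma_n^2:=n\,\Var(\hat\theta)$, which is bounded away from $0$ in the non-degenerate case and of exact order $(nh_n)^{-1}$ in the degenerate case — the lower bound in the latter because $\Var(K_{12,n})\gtrsim h_n^{-1}f(x)\int K(u)^2\,du>0$ under Assumptions~\ref{a:rates}(iii) and~\ref{a:JEL_complete}(i) — whereas Theorem~\ref{thm:unif_rate} bounds $(\hat\theta-\theta)^2$ by $O_p(h_n^4+n^{-1})$ (non-degenerate) resp.\ $O_p(h_n^4+n^{-2}h_n^{-1})$ (degenerate), so that $(\hat\theta-\theta)^2/\sigma_n^2=O_p(n^{-1}+h_n^4)=o_p(1)$ in the first case and $=O_p(n^{-1}+nh_n^5)=o_p(1)$ in the second, using $n^2h_n^5=(nh_n^{5/2})^2\to0$ from Assumption~\ref{a:JEL_complete}(ii).

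The main obstacle is this last step: controlling $\Gamma^2$ and $\Gamma_m^2$ from below — i.e.\ showing that the kernel-induced spike in $\Var(K_{12,n})$ survives in the pseudo-value variance estimates — uniformly over the degenerate and non-degenerate regimes, and correspondingly that the Efron--Stein correction $\sum_{i<j}Q_{ij}^2$ removes exactly the quadratic over-count. These bounds are, however, precisely what the proof of Theorem~\ref{thm:asymptotic_dist} delivers, so the Corollary needs essentially only the Slutsky combination above. An equivalent, slightly more self-contained route is the direct argument $\sqrt n(\hat\theta-\theta)/\sigma_n\stackrel{d}{\to}N(0,1)$ — via the H\'ajek-projection CLT in the non-degenerate case and a degenerate $U$-statistic CLT (valid because Gaussian degeneracy holds, as noted in \cite{GrahamNiuPowell2019}) in the degenerate case, with Assumption~\ref{a:JEL_complete}(ii) killing the $O(\sqrt n\,h_n^2)$ bias relative to $\sigma_n$ — combined with $\Gamma_m^2/\sigma_n^2\stackrel{p}{\to}1$ and Slutsky.
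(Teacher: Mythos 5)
Your proposal is correct, and once the exact algebra around $V_i^m(\theta)$ is unwound it rests on precisely the two ingredients the paper's (remark-level) deduction uses: the CLT $\sqrt{n}\sigma^{-1}(\hat\theta-\theta)=\sqrt{n}\sigma^{-1}\cdot n^{-1}\sum_{i=1}^n V_i(\theta)\stackrel{d}{\to}N(0,1)$ and the ratio consistency $\Gamma_m^2/\sigma^2\convinp 1$, both established inside the proof of Theorem \ref{thm:asymptotic_dist}, combined via Slutsky --- exactly the ``direct argument'' you name in your closing paragraph. The detour through the self-normalized statistic $T_n$ is algebraically exact and harmless, but it adds nothing, since the paper derives the $V_i^m$ limit theory from those same two facts in the reverse direction.
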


\subsection{Inference for incomplete dyadic data with JEL and modified JEL}\label{sec:incomplete_data}
Let us now consider dyadic KDE with randomly missing incomplete data. 
Suppose the researcher observes
\begin{align*}
X_{ij}^*=Z_{ij}X_{ij},
\end{align*}
where the random variables $Z_{ij}\stackrel{d}{=}\text{Bernoulli}(p_n)$ that determine whether each edge is observed are i.i.d. and assumed to be independent from $(X_{ij})_{(i,j)\in \ijI}$ for some unknown probability of observation $p_n\in(0,1)$ { which can be but is not necessarily fixed in sample size.} Define $\hat N=\hat p_n {n\choose 2}$, $\hat  N_1=\hat p_n {n-1\choose 2}$, and $\hat N_2=\hat p_n {n-2\choose 2}$, with $\hat p_n={n\choose 2}^{-1}\sum_{\ijI} Z_{ij}$ being an estimate for $p_n$. Now let $\mathbb I_n=\{(i,j)\in \Ink:Z_{ij}=1\}$,  $\mathbb I_n^{(k)}=\{(i,j)\in \Ink:Z_{ij}=1, i,j\ne k\}$ and $\mathbb I_n^{(k,\ell)}=\{(i,j)\in \Ink:Z_{ij}=1, i,j\not\in\{ k,\ell\}\}$. {  For a fixed $\theta=f(x)$, define
the incomplete dyadic KDE estimator and its leave-out counterparts by
\begin{align*}
&\hat \theta_{\text{inc}}=\frac{1}{\hat N}\sum_{(i,j)\in \mathbb I_n} \frac{1}{h_n}K\left(\frac{x-\Xij}{h_n}\right),\quad \hat \theta_{\text{inc}}^{(i)}=\frac{1}{\hat N_1}\sum_{(k,l)\in\mathbb I_n^{(i)}}\frac{1}{h_n} K\left(\frac{x-X_{kl}}{h_n}\right),\: i=1,...,n,\\
&\hat \theta_{\text{inc}}^{(i,j)}=\frac{1}{\hat N_2}\sum_{(k,l)\in \mathbb I_n^{(i,j)}}\frac{1}{h_n} K\left(\frac{x-X_{kl}}{h_n}\right),\quad 1\le i<j\le n,
\end{align*}
and the JEL pseudo true value for incomplete dyadic data by
\begin{align*}
\hat V_i(\theta)=n\hat S(\theta)-(n-1) \hat S^{(i)}(\theta),
\end{align*}
where $\hat S(\theta)=\hat \theta_{\text{inc}}-\theta$, $\hat S^{(i)}(\theta)=\hat \theta_{\text{inc}}^{(i)}-\theta$.}
In addition, for each $i< j$, define 
\begin{align}
&\hat S^{(i,j)}(\theta)=\hat \theta_{\text{inc}}^{(i,j)}-\theta,\nonumber\\
&\hat Q_{ij}=\frac{n-3}{n-1}\left[n\hat S(\theta)-(n-1)\{\hat S^{(i)}(\theta)+\hat S^{(j)}(\theta)\}+(n-2)\hat S^{(i,j)}(\theta)\right],\nonumber\\
&\hat\Gamma^2 = \frac{1}{n}\sumi \hat V_i(\theta)^2,\qquad  \hat\Gamma^2_m= \frac{1}{n}\sumi \hat V_i(\hat \theta_{\text{inc}})^2 - \frac{1}{n}\sum_{i=1}^{n-1}\sum_{j=i+1}^n \hat Q_{ij}^2. \nonumber
\end{align}
For each $\theta$, define the modified JEL pseudo true value for incomplete dyadic data by
\begin{align*}
\hat V_i^m(\theta)=\hat V_i(\hat\theta_{\text{inc}})-\hat \Gamma \hat \Gamma_m^{-1}\{\hat V_i(\hat\theta_{\text{inc}})- \hat V_i(\theta)\}.
\end{align*}
Now the modified JEL function for incomplete dyadic data for $\theta$ can be defined by
\begin{align*}
\hat\ell^m(\theta)&=-2\sup_{w_1,...,w_n} \sumi \log(n w_i),\\
&\text{s.t.}\quad w_i\ge 0, \quad \sumi w_i=1,\quad \sumi w_i \hat V_i^m (\theta)=0.
\end{align*}
and $\hat\ell(\theta)$ is defined analogously with $\hat V_i^m$ replaced by $\hat V_i$.
The Lagrangian dual problems are now
\begin{align*}
 \hat\ell(\theta)=2\sup_\lambda \sumi \log( 1+\lambda \hat V_i(\theta) ),\qquad 
 \hat\ell^m(\theta)=2\sup_\lambda \sumi \log( 1+\lambda \hat V^m_i(\theta) ).
\end{align*}

\begin{assumption}[JEL and modified JEL for incomplete data]\label{a:JEL_incomplete}
	Suppose $Z_{ij}\stackrel{d}{=}\text{Bernoulli}(p_n)$ are i.i.d. independent from $(X_{ij})_{(i,j)\in \ijI}$ for some unknown sequence $p_n\in(0,1)$ with $nh_n p_n  \to \infty$.
\end{assumption}
Assumption \ref{a:JEL_incomplete} imposes the random missing structure on the data and establishes a lower bound on the rate of $p_n$, the probability of observing an edge, to ensure the presence of enough data to establish asymptotic theory. Note that the this assumption accommodates sequences of $p_n$ that converge to zero slowly enough. { The condition $nh_n p_n\to \infty$ ensures that we have asymptotically increasing effective sample size. }

The following result is an incomplete data counterpart of Theorem \ref{thm:asymptotic_dist}.
\begin{theorem}[Wilks' theorem for { JEL and} modified JEL for dyadic KDE with incomplete data]\label{thm:asymptotic_dist_incomplete}
	Suppose Assumptions \ref{a:rates}, \ref{a:JEL_complete} and \ref{a:JEL_incomplete} are satisfied, then
\begin{align*}
\hat\ell^m(\theta)\stackrel{d}{\to} \chi_1^2.
\end{align*}
In addition, if  $f$ is non-degenerate at $x$ or $p_n\to 0$, then
\begin{align*}
\hat\ell(\theta)\stackrel{d}{\to}\chi_1^2.
\end{align*}
\end{theorem}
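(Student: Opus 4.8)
\emph{Proof proposal.} The plan is to mirror the proof of Theorem \ref{thm:asymptotic_dist}, inserting an extra layer to handle the sampling indicators $Z_{ij}$ and the plug-in $\hat p_n$. Begin with the usual empirical-likelihood linearization of the dual problems: one shows the dual maximizer is $\Op(n^{-1/2})$, so that for $\bullet\in\{\,\cdot\,,m\}$,
\[
\hat\ell^{\,\bullet}(\theta)=\frac{\big(n^{-1/2}\textstyle\sum_{i=1}^n \hat V_i^{\,\bullet}(\theta)\big)^2}{n^{-1}\sum_{i=1}^n \hat V_i^{\,\bullet}(\theta)^2}+\op(1).
\]
This is essentially identical to the complete-data case once one has the analogues, for the randomly thinned edge set $\mathbb I_n$, of the moment and maximal bounds used there (the incomplete-data versions of ``Lemma 2''); the lower bounds $nh_n^2\to\infty$ and $nh_np_n\to\infty$ are what keep the linearization errors and the spikes of $h_n^{-1}K(\,\cdot\,/h_n)$ under control. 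The combinatorial identities behind the pseudo-values survive the introduction of the $Z$'s: since $\hat N,\hat N_1,\hat N_2$ all carry the common factor $\hat p_n$, one still has $\sum_i\hat\theta_{\text{inc}}^{(i)}=n\hat\theta_{\text{inc}}$, hence $n^{-1}\sum_i\hat V_i(\theta)=\hat\theta_{\text{inc}}-\theta$, $n^{-1}\sum_i\hat V_i(\hat\theta_{\text{inc}})=0$, $\hat V_i(\hat\theta_{\text{inc}})-\hat V_i(\theta)=-(\hat\theta_{\text{inc}}-\theta)$, and therefore $\hat V_i^m(\theta)=\hat V_i(\hat\theta_{\text{inc}})+\hat\Gamma\hat\Gamma_m^{-1}(\hat\theta_{\text{inc}}-\theta)$, exactly as for complete data. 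A short computation then gives $\hat\ell(\theta)=n(\hat\theta_{\text{inc}}-\theta)^2/\hat\Gamma^2+\op(1)$ and $\hat\ell^m(\theta)=n(\hat\theta_{\text{inc}}-\theta)^2/\hat\Gamma_m^2+\op(1)$, so everything reduces to (a) a central limit theorem for $\sqrt n(\hat\theta_{\text{inc}}-\theta)$ and (b) the probability limits of $\hat\Gamma^2$ and $\hat\Gamma_m^2$.

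For (a), decompose $\hat\theta_{\text{inc}}-\theta=(\hat\theta-\theta)+(\hat\theta_{\text{inc}}-\hat\theta)$, where $\hat\theta$ is the infeasible complete-data estimator. The term $\hat\theta-\theta$ is treated as in the proofs of Theorems \ref{thm:unif_rate}--\ref{thm:asymptotic_dist}: an $O(h_n^2)$ bias, a Hájek (linear-in-$U_i$) term with variance of order $\Var(f_{X_{12}|U_1}(x|U_1))/n$, and a completely degenerate remainder of order $(n^2h_n)^{-1/2}$. For the missingness error, write $\hat p_n=p_n+\Op((n^2p_n)^{-1/2})$; replacing $\hat p_n$ by $p_n$ changes $\hat\theta_{\text{inc}}-\hat\theta$ only by a negligible term (using $n^2p_n\to\infty$, which follows from $nh_np_n\to\infty$ and $h_n\to0$), and the leading part is $\big(p_n\binom n2\big)^{-1}\sum_{\ijI}(Z_{ij}-p_n)K_{ij,n}$, which conditionally on $(U_i)_i$ is a sum of independent mean-zero variables with (conditional, hence unconditional) variance of order $(p_nn^2h_n)^{-1}$ — the complete-data degenerate variance multiplied by $(1-p_n)/p_n$. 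One checks that $\hat V_i(\theta)$ collects the vertex-$i$ aggregate of \emph{both} channels (the $U_i$-channel and the $Z$-channel over edges incident to $i$) and that $\hat Q_{ij}$ collects the pair-$(i,j)$ interaction of both; a Lindeberg-type argument (negligibility of $\max_{\ijI}K_{ij,n}^2$ relative to the total variance uses $n^2h_np_n\to\infty$) then gives $\sqrt n(\hat\theta_{\text{inc}}-\theta)\stackrel{d}{\to} N(0,\sigma^2)$, with $\sigma^2$ the Hájek variance $4\Var(f_{X_{12}|U_1}(x|U_1))$ plus, when that vanishes, the rescaled degenerate-plus-missingness variance; since $nh_n^{5/2}\to0$, the $O(h_n^2)$ bias is negligible at the relevant stochastic scale in every regime.

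For (b), the modified jackknife variance estimator is the crucial object: the Efron--Stein/Hinkley subtraction of $n^{-1}\sum_{i<j}\hat Q_{ij}^2$ removes exactly the contribution that would otherwise be ``double-counted'' by $n^{-1}\sum_i \hat V_i(\hat\theta_{\text{inc}})^2$ — the completely-degenerate $U$-term and the conditionally-independent missingness term — so that $\hat\Gamma_m^2$ is consistent for the asymptotic variance of $\sqrt n(\hat\theta_{\text{inc}}-\theta)$ (suitably understood when that variance vanishes) in \emph{every} regime. Combined with the CLT and the linearization this yields $\hat\ell^m(\theta)\stackrel{d}{\to}\chi_1^2$. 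For $\hat\ell$ one must instead show that $\hat\Gamma^2$ itself is consistent for $\sigma^2$; this is automatic when the Hájek channel dominates — the non-degenerate case, where $\hat\Gamma^2$ is the ordinary consistent leave-one-out jackknife variance — and it is precisely the content of the ``$p_n\to0$'' clause that it continues to hold when $f$ is degenerate but the observed graph is sparse.

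I expect this last step to be the main obstacle. It requires pinning down exactly which of the three stochastic channels (Hájek, degenerate-$U$, missingness) governs $\sqrt n(\hat\theta_{\text{inc}}-\theta)$ and $\hat\Gamma^2$ in the joint regime $\{f\text{ degenerate},\ p_n\to0\}$, and verifying in that regime that $\hat\Gamma^2$ tracks $\sigma^2$ rather than an inflated multiple of it — after which the conclusion follows by Slutsky's theorem exactly as in the complete-data proof. (The same analysis identifies the limit in the remaining case, $f$ degenerate with $p_n$ bounded away from $0$, as $\tfrac12\chi_1^2$, consistent with Theorem \ref{thm:asymptotic_dist}; that case is not covered by the statement.)
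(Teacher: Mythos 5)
Your overall route is the paper's: replace $\hat p_n$ by $p_n$ (a negligible $O_P(n^{-1})$ perturbation), decompose the centered statistic into the complete-data term plus the missingness term $T(\theta)=N^{-1}\sum_{\ijI}(Z_{ij}-p_n)K_{ij,n}$ with $N=p_n\binom{n}{2}$, prove a CLT for $T$ conditional on $X_{I_n}$ and combine it with the complete-data CLT, and then track how each stochastic channel enters $\hat\Gamma^2$, $\hat\Gamma_m^2$ and the $\hat Q_{ij}$ correction. The empirical-likelihood linearization, the combinatorial identities for the pseudo-values, and the reduction to self-normalized ratios are all as in the paper.

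The gap is the step you yourself flag as ``the main obstacle,'' and it is not a side issue: it is the entire content of the ``$p_n\to0$'' clause. The paper does not leave this at the level of ``verify that $\hat\Gamma^2$ tracks the variance''; it computes, via the identity $\sum_i(a_i-\bar a)^2=n^{-1}\sum_{i<j}(a_i-a_j)^2$ applied to $\tilde S^{(i)}=S^{(i)}+T^{(i)}$, the exact constants with which the three channels (H\'ajek, degenerate-$U$, missingness) enter $n^{-1}\sum_i\tilde V_i(\theta)^2$, shows the $S$--$T$ cross terms vanish by conditional independence, evaluates $n^{-1}\sum_{i<j}\hat Q_{ij}^2$ from the expansion of $\hat Q_{ij}$ in terms of $Z_{ij}R_{ij}/p_n$, and only then reads off the limits of $\hat\Gamma^2$ and $\hat\Gamma_m^2$ relative to $\Var\bigl(\sqrt n(\hat\theta_{\text{inc}}-\theta)\bigr)$. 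This bookkeeping is delicate precisely because the leave-one-out jackknife inflates completely degenerate edge-level contributions by a factor of two --- that is why plain JEL yields $\tfrac12\chi_1^2$ under degeneracy with complete data --- and $(Z_{ij}-p_n)K_{ij,n}$ is, conditionally, exactly such an edge-level term. Whether the missingness channel enters $\hat\Gamma^2$ with the same constant as it enters the asymptotic variance, or with twice that constant, is what decides between $\chi_1^2$ and $\tfrac12\chi_1^2$ in the joint regime where $f$ is degenerate and $p_n\to0$; your proposal asserts the favorable outcome without the computation that decides it. To close the argument you must write $T^{(i)}-T^{(j)}$ explicitly as a sum over the $2(n-2)$ edges in the symmetric difference of the two leave-one-out index sets, compute its second moment, and do the analogous second-moment calculation for $\hat Q_{ij}$; everything else in your outline matches the paper's proof.
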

A proof can be found in Section D.3 of the Supplementary Appendix.

{ 
Theorem \ref{thm:asymptotic_dist_incomplete} states the asymptotic distributions of the proposed JEL and modified JEL statistics for incomplete data. As in Theorem \ref{thm:asymptotic_dist}, the modified JEL is asymptotically pivotal regardless of the asymptotic regime. However, with incomplete data, the asymptotic pivotality of the proposed JEL statistic happens not only under nondegeneracy but also when the proportion of observed data goes to zero slowly asymptotically. {  In such a scenario, the term that consists of the randomness induced by the missing process (which is conditionally independent) dominates the original leading asymptotic term, which is potentially not pivotal. }  Note that the construction of confidence intervals in Remark \ref{rem:pivotal} can be adapted with { $\ell^\bullet\in\{\hat \ell^m,\hat \ell\}$}. }

\section{Simulation studies}\label{sec:simulation}
We consider four sets of data generating processes (DGP) in our simulations. The first two follow
\begin{align*}
X_{ij}=\beta U_i U_j+U_{\{i,j\}}
\end{align*}
with $U_{\{i,j\}}\stackrel{i.i.d.}{\sim}N(0,1)$ and $U_i=-1$ with probability $1/3$ and equals $1$ otherwise. We set $\beta\in\{0,1\}$, where $\beta=1$ is the sufficiently non-degenerate DGP considered in \cite{GrahamNiuPowell2019}. This has the density:
\begin{align*}
f(x)=\frac{5}{9}\phi(x-1)+\frac{4}{9}\phi(x+1),
\end{align*}
where $\phi$ is the density of a standard normal random variable. Meanwhile, $\beta=0$ corresponds to the degenerate case where the true DGP is i.i.d. standard normal with density $f=\phi$. {  The last two sets of DGPs are the incomplete data counterparts of the first two, with the same DGPs except that not all of the edges are observed following the set-up in Section \ref{sec:incomplete_data}. For these cases, we set $p_n=0.5$.}
We utilize the rule of thumb bandwidths $h_n^{S}$ and $h_n^{S,\text{inc}}$ proposed in (A.1) and (A.2) in Section A of the Supplementary Appendix, respectively. For complete dyadic data, we consider five alternative inference methods that are theoretically robust to dyadic clustering: (i) Wald statistic with FG variance estimator for dyadic KDE proposed in \citet[Equation (22)]{GrahamNiuPowell2019} (FG), (ii) Wald statistic with the leading term of FG (lFG), (iii) jackknife empirical likelihood (JEL), (iv) Wald statistic with modified jackknife variance estimator (mJK) from Remark \ref{rem:jack_var}, and (v) modified jackknife empirical likelihood (mJEL). Note that (iii)-(v) are studied in this paper. {  For incomplete dyadic data, we consider the incomplete counterparts of (iii) and (v) from Section \ref{sec:incomplete_data}.}
Throughout the simulation studies, we set the design point $x=1.675$, consistent with the simulation study in \cite{GrahamNiuPowell2019}. Each simulation is iterated $5,000$ times.\\
\par

Tables \ref{table:dyadic_cov_prob}, \ref{table:iid_cov_prob}, \ref{table:incomplete_dyadic}, and \ref{table:incomplete_iid} show the simulation results under these different settings.
In general, the simulation supports our theoretical findings. {  For complete data, under both the dyadic and i.i.d. DGPs, both mJEL and mJK-based confidence intervals enjoy close to nominal coverage rates even with moderate sample sizes, while the FG estimator is oversized with small sample sizes and converges to the nominal size when $n$ is large. The mJK-based intervals are slightly more oversized than mJEL-based intervals for small sample sizes but they perform similarly when $n$ grows larger. The original JEL and lFG estimators are conservative but asymptotically valid under the dyadic DGP. Consistent with Remark 5, the JEL estimator is always severely conservative under the i.i.d. DGP, as is the lFG estimator. The behaviors of JEL and mJEL-based intervals under incomplete data are similar to those we observed with complete data; JEL is conservative but asymptotically converging to the correct size in the non-degenerate case while mJEL remains quite precise under both degenerate and non-degenerate DGPs.}


\section{Empirical application: Airport congestion}\label{sec:application}
Flight delays caused by airport congestion are a familiar experience to the flying public. Aside from the frustration experienced by affected passengers, these delays are an important cause of fuel wastage and result in the release of criteria pollutants such as nitrogen oxides, carbon monoxide and sulfur oxides. As discussed in \cite{SchlenkerWalker2016}, these releases harm the health of people living near airports. To mitigate these harms, it is important to understand the distribution of congestion across routes, or flights between pairs of airports.\\
\par
In this empirical application, we apply the dyadic KDE and JEL and modified JEL procedure to understand the distribution of airport congestion, as measured by taxi time, across pairs of airports in the United States. The taxi time of a flight is defined as the sum of taxi-out time (the time between the plane leaving its parking position and taking off) and taxi-in time (the time between the plane landing and parking). In this context, the edges $X_{ij}$ and $X_{jk}$ represent flights between airport $i$ and airport $j$ and flights between airport $j$ and airport $k$ respectively; the fact that flights share a common airport $j$ induces dependence between $X_{ij}$ and $X_{jk}$. Therefore, this is a natural setting to apply dyadic KDE and modified JEL.\\
\par
Data on flight taxi times are obtained from the US DOT Bureau of Transportation Statistics (BTS) \textit{Reporting Carrier On-Time Performance} dataset, which compiles data on US domestic non-stop flights from all carriers which earn at least 0.5\% of scheduled domestic passenger revenues. From the flight-level data for June 2020, we collapse taxi time into various summary statistics (mean, 95th percentile, maximum) for each origin-destination airport pair, with the summary statistics being taken over flights in both directions so that the resulting network is undirected. Noting that a missing edge means that there were no non-stop flights between the vertices the edge connects, we apply the KDE, JEL, and modified JEL to the resulting network. For simplicity, the sample is restricted to the 100 largest airports by number of departing flights.\\
\par
The histograms, dyadic KDE estimates, pointwise JEL and pointwise modified JEL 95\% confidence intervals for the mean, 95th percentile and maximum taxi times between airport pairs in June 2020 are shown in Figure 1A-1C. The intervals obtained by numerically inverting the test statistic for each design point (each whole minute in the range of each statistic). {  Since both the JEL and modified JEL confidence intervals in Figure 1 are pointwise, a comparison between JEL and mJEL can only be made for each of the finitely many points at which the confidence intervals were calculated. Note that the design points are the same for the JEL and mJEL graphs, making a point-by-point comparison possible.}\\
\par
For applied researchers, the results demonstrate how studying only the mean can be incomplete; unlike the mean, the 95th percentile and (particularly) maximum travel times exhibit positive skewness. Routes with taxi times lying in the positively skewed region cause the most problems, but this distinction is lost when only studying means. For all these statistics, the modified JEL procedure which we propose provides relatively precise estimates even under small sample sizes (100 vertices) and a large proportion of missing edges (72\% missing).

\section{Conclusion}
This paper studies the asymptotic properties of dyadic kernel density estimation. We establish uniform convergence rates under general conditions. For inference, we propose a modified jackknife empirical likelihood procedure which is valid regardless of degeneracy of the underlying DGP. We further extend the results to cover incomplete or missing at random dyadic data. Simulation studies show robust finite sample performance of the modified JEL inference for dyadic KDE in different settings. We illustrate the method via an application to airport congestion.\\
\par
Despite our focus on density estimation, the inference approach we take can be extended to cover the local regression as in \cite{graham2021minimax}, the local polynomial density estimation of \cite{CattaneoMaJansson2020JASA} or local polynomial distributional regression of \cite{CattaneoMaJansson2020wp} under dyadic data. Another potential direction is to study incomplete data under unconfoundedness or other non-random missing models. 
{  Finally, in light of the recent paper of \cite{cattaneo2022} which pioneers uniform inference for dyadic KDE using Wald statistics, it would be interesting to consider ways to adapt modified JEL for uniform inference as well.  }
Investigating the modified JEL under these assumptions provides interesting directions for future research.

\clearpage

\appendix

\allowdisplaybreaks

\section{Practical guideline for bandwidth}\label{sec:practical_guideline}

Here we provide some rules of thumb for bandwidth choices. As mentioned in Section \ref{sec:theory}, the precise bandwidth is not as important in the non-degenerate case since it does not enter the first-order asymptotics. 
As shown in \cite{GrahamNiuPowell2019}, the MSE-optimal bandwidth has the expression
\begin{align*}
h_n^{MSE}=\left\{\frac{1}{4} \frac{f(w)c_1 }{\left(2^{-1}f''(w) c_2\right)^2}\frac{2}{n(n-1)}\right\}^{1/5},
\end{align*}
where $c_1=\int K^2(u)du$ and $c_2=\int u^2 K(u)du$. \cite{GrahamNiuPowell2019} pointed out that, under non-degeneracy, KDE is robust to different rates of bandwidths; the unknowns $f(w)$ and $f''(w)$ do not affect its asymptotic behavior. {  On the other hand, under degeneracy, for the squared bias to be negligible to variance, the theory requires both $nh_n^2\to\infty $ and $nh_n^{5/2}\to 0$. Hence we set $h_n\sim n^{-4/9}$ }and propose {  the following constant, which is an adaptation of the rule of thumb by \citet[pp. 48, Equation (3.31)]{Silverman1986}}
\begin{align}
h_n^{S}=0.9\cdot \min\left\{\hat\sigma,\text{IQR}/1.34\right\}\cdot \left\{\frac{2}{n(n-1)}\right\}^{2/9},\label{eq:ROT_bandwidth}
\end{align}
where $\hat\sigma$ and $\text{IQR}$ are the standard error and the interquartile range of $(X_{ij})_{(i,j)\in \Ink}$, i.e. the difference of the $75$-th quantile and the $25$-th quantile, respectively.
For incomplete data, to account for the effective sample size, we propose
\begin{align}
h_n^{S,\text{inc}}=0.9\cdot \min\left\{\hat\sigma_{\text{inc}},\text{IQR}_{\text{inc}}/1.34\right\}\cdot \left\{\frac{2}{\hat p \cdot n(n-1)}\right\}^{2/9},\label{eq:ROT_bandwidth_incomplete}
\end{align}
where $\hat\sigma_{\text{inc}}$ and $\text{IQR}_{\text{inc}}$ are the standard error and the interquartile range of $(X_{ij})_{(i,j)\in \mathbb I_n}$, respectively. As discussed in Section \ref{sec:simulation}, these rule of thumb bandwidth choices work well in our simulation studies.
Formal analysis of the optimality of coverage probabilities of different bandwidth choices, such as those studied in \cite{CalonicoCattaneoFarrell2018}, under dyadic asymptotics would be an interesting future venue of research. 
\section{Some empirical process definitions and results}\label{sec:empirical process}
We shall first list some useful definitions from the empirical process theory literature. For more detail, see \cite{vdVW1996} or \cite{GineNickl2016}.
Let $S$ be a set and $\calC$ be a nonempty class of subsets of $S$. Pick any finite set $\{x_1,...,x_n\}$ of size $n$. We say that $\calC$ picks out a subset $A\subset \{x_1,...,x_n\}$ if there exists $C\in \calC$ such that $A=\{x_1,...,x_n\}\cap C$.
Let $\Delta^\calC(x_1,...,x_n)$ be the number of subsets of $\{x_1,...,x_n\}$ picks out by $\calC$.
We say the class $\calC$ shatters $\{x_1,...,x_n\}$ if $\calC$ picks out all of its $2^n$ subsets.  
The VC index $V(\calC)$ is defined by the smallest $n$ for which no set of size $n$ is shattered by $\calC$, i.e., with $m^\calC(n):=\max_{x_1,...,x_n}\Delta^\calC(x_1,...,x_n)$,
\begin{align*}
V(\calC)=\begin{cases}
\inf\{n:m^\calC(n)<2^n\}, \text{ if such set is non-empty,}\\
+\infty, \text{ otherwise.}
\end{cases}
\end{align*}
The class $\calC$ is called a Vapnik–Chervonenkis class, or VC class for short, if $V(\calC)<\infty$. The subgraph of a real function $f$ on $S$ is defined as 
$
\sg(f)=\{(x,t):t<f(x)\}
$.
A function class $\calF$ on $S$ is a VC subgraph class if the collection of subgraphs of $f\in\calF$, $\sg(\calF)=\{\sg(f):f\in\calF\}$, is a VC class of sets in $S\times \Real$. We can define the VC index of $\calF$, $V(\calF)$, as the VC index of $\sg(\calF)$. 
Let $(T,d)$ be a pseudometric space (i.e. $d(x,y)=0$ does not imply $x=y$). For $\varepsilon>0$, an $\varepsilon$-net of $T$ is a subset $T_\varepsilon$ of $T$ such that for every $t\in T$ there exists $t_\varepsilon\in T_\varepsilon$ with $d(t,t_\varepsilon)\le \varepsilon$. The $\varepsilon$-covering number $N(T,d,\varepsilon)$ of $T$ is defined by
\begin{align*}
N(T,d,\varepsilon)=\inf \{\text{Card}(T_\varepsilon):\, T_\varepsilon \text{ is an $\varepsilon$-net of $T$}\}.
\end{align*}
For a probability measure $Q$ on a measurable space $(S,\calS)$,
for any $r\ge 1$, denote $\|f\|_{Q,r}=\left(\int |f|^r dQ\right)^{1/r}$ for $f\in L^r(S)$. A function $G:S\to\R$ is an envelope of a class of functions $\calG\ni g$, $g:S\to\R$, if $\sup_{g\in\calG}|g(s)|_\infty\le G(s)$ for all $s\in S$.
Suppose $\calF$ is a VC subgraph class with envelope $F$, then Theorem 2.6.7 in \cite{vdVW1996} suggests that for any $r\in [1,\infty)$, we have
\begin{align*}
\sup_{Q} N(\calF,\|\cdot\|_{Q,r},\varepsilon\|F\|_{Q,r}) \le KV(\calF)(16e)^{V(\calF)} \left(\frac{2}{\varepsilon}\right)^{r (V(\calF) -1)}
\end{align*}
for all $0<\varepsilon\le 1$, where $K$ is universal and the supremum is taken over all finite discrete probability measures.
\par
A closely related concept is the VC type. We say a set of functions $\calF$ with envelope $F$ is a VC type class with characteristics $(A,v)$ if for some positive constants $A,v$,
\begin{align}
\sup_{Q}N(\calF,\|\cdot\|_{Q,2},\varepsilon\|F\|_{Q,2})\le \left(\frac{A}{\varepsilon}\right)^v,\label{eq:VC-type}
\end{align}
where the supremum is taken over all finite discrete measures on $(S,\calS)$.\\
\par
The following restates Lemma A.2 in \cite{GhosalSenvdV2000}, a useful result for considering uniform covering numbers for conditional expectations of classes of functions.
\begin{lemma}\label{lem:conditional_entropy}
	Let $\calH$ be a class of functions $h:\calX\times \calY\to \R$ with envelope $H$ and $R$ a fixed probability measure on $\calY$. For a given $h\in \calH$, let $\overline h:\calX\to \R$ be $\overline h=\int h(x,y)dR(y)$. Set $\overline \calH=\{\overline h:h\in\calH\}$. Note that $\overline H$ is an envelope of $\overline \calH$. 
	Then, for any $r,s\ge 1$, $\varepsilon\in(0,1]$,
	\begin{align*}
	\sup_{Q}N(\overline\calH,\|\cdot\|_{Q,r},2\varepsilon\|\overline H\|_{Q,r})\le 	\sup_{Q}N(\calH,\|\cdot\|_{Q\times R,s},\varepsilon^r\|H\|_{Q\times R,s})	.
	\end{align*}

\end{lemma}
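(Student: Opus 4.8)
The plan is to transfer a cover of $\calH$ to a cover of $\bar\calH$ by pushing it forward through the averaging operator $h\mapsto\bar h$, keeping careful track of how the covering radius and the envelope normalization change. Fix a finitely discrete probability measure $Q$ on $\calX$. The engine is Jensen's inequality applied to the $R$-integral: for $h_1,h_2\in\calH$,
\[
|\bar h_1(x)-\bar h_2(x)|\le\int|h_1-h_2|(x,y)\,dR(y)\le 2\bar H(x),
\]
the last bound using the pointwise envelope domination $|h_1-h_2|\le 2H$. Hence, for $r\ge 1$, $|\bar h_1(x)-\bar h_2(x)|^{r}\le(2\bar H(x))^{r-1}\int|h_1-h_2|(x,y)\,dR(y)$, and integrating against $Q$ gives
\[
\|\bar h_1-\bar h_2\|_{Q,r}^{r}\le 2^{r-1}\int\bar H(x)^{r-1}\Bigl(\int|h_1-h_2|(x,y)\,dR(y)\Bigr)dQ(x).
\]

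For $r=1$ this already yields $\|\bar h_1-\bar h_2\|_{Q,1}\le\|h_1-h_2\|_{Q\times R,1}$, and since $\|\bar H\|_{Q,1}=\|H\|_{Q\times R,1}$, one finishes using the power-mean inequality and $\varepsilon\le 1$. For $r>1$ the idea is to absorb the weight $\bar H(x)^{r-1}$ into the base measure: set $\,d\tilde Q(x)\propto\bar H(x)^{r-1}\,dQ(x)$, which is again finitely discrete, so $\tilde Q\times R$ is admissible on the right-hand side; then the displayed bound becomes a multiple of $\|h_1-h_2\|_{\tilde Q\times R,1}$, hence of $\|h_1-h_2\|_{\tilde Q\times R,s}$ by the power-mean inequality. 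One then takes a minimal $\varepsilon^{r}\|H\|_{\tilde Q\times R,s}$-net $\{h_1,\dots,h_N\}$ of $\calH$ in $L^{s}(\tilde Q\times R)$, with $N\le\sup_{Q'}N(\calH,\|\cdot\|_{Q'\times R,s},\varepsilon^{r}\|H\|_{Q'\times R,s})$, and verifies that the image $\{\bar h_1,\dots,\bar h_N\}$ is a $2\varepsilon\|\bar H\|_{Q,r}$-net of $\bar\calH$ in $L^{r}(Q)$; taking the supremum over $Q$ then gives the claim.

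The main obstacle is exactly this last verification, in which the change of exponent $s\to r$ and the change of envelope scale from $\|H\|_{\tilde Q\times R,s}$ to $\|\bar H\|_{Q,r}$ must be matched simultaneously. A crude estimate using only Jensen's and the power-mean inequality controls $\|\bar h-\bar h_i\|_{Q,r}$ only by a multiple of $\varepsilon^{r}\|H\|_{Q\times R,s}$, which need not be bounded by $2\varepsilon\|\bar H\|_{Q,r}$ (the right-hand envelope norm can be far larger than the left-hand one when $H$ is spread out in $y$), so the reweighting of the base measure and the pointwise envelope bound are genuinely needed, and it is here that $\varepsilon\in(0,1]$ (so $\varepsilon^{r}\le\varepsilon$) and the power $\varepsilon^{r}$ rather than $\varepsilon$ on the right-hand side are used to absorb the residual numerical factors. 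This bookkeeping is precisely the content of Lemma~A.2 in \cite{GhosalSenvdV2000}, whose argument we follow; we refer there for the remaining, routine computations.
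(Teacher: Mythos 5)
The paper does not prove this lemma at all: it is presented verbatim as a restatement of Lemma~A.2 of \cite{GhosalSenvdV2000}, with the reader referred there. Your proposal ultimately does the same thing --- deferring the delicate envelope/exponent bookkeeping to that reference --- while correctly identifying the mechanism of its proof (Jensen's inequality, the pointwise domination $|h_1-h_2|\le 2H$, and the reweighting $d\tilde Q\propto \overline H^{\,r-1}\,dQ$ of the finitely discrete base measure), so it is consistent with, and essentially the same as, the paper's treatment.
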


\section{Auxiliary lemmas}

{ 
Throughout the rest of this Supplementary Appendix, denote
	\begin{align*}
	\Omega_1(x)=\Var\left(f_{X_{12}|U_1}(x|U_1)\right),\qquad\Omega_2(x)=f(x)\int K^2(u) du.
	\end{align*}
	Also, for notational convenience, define $K_{ij,n}=K_{ji,n}$ for $1\le j<i \le n$.
\begin{lemma}\label{lem:max_V}
	Under Assumptions \ref{a:rates} and \ref{a:JEL_complete}, it holds that
	$\max_{1\le i \le n}|V_i(\theta)|=O_P(h_n^{-1})$ under non-degeneracy and $\max_{1\le i \le n}|V_i(\theta)|=O_P(h_n^{-1/2})$ under degeneracy. 
	\begin{proof}
		Notice that a direct calculation yields that for each $i=1,...,n$,
		\begin{align*}
		S(\theta)= \frac{2}{n(n-1)}\sum_{\ijI}K_{ij,n}-\theta=\frac{2}{n(n-1)}\sum_{j\ne i}^n(K_{ij,n}-\theta) + \frac{n-2}{n}S^{(i)}(\theta).
		\end{align*}
		Hence under non-degeneracy at $x$, one has
		\begin{align*}
		|V_i(\theta)|=&\left|\frac{2}{(n-1)}\sum_{j\ne i}^n(K_{ij,n}-\theta) -S^{(i)}(\theta)\right|
		\lesssim \max_{1\le i< j\le n}|K_{ij,n}|.
		\end{align*}
		To bound the right hand side with high probability, notice that as $nh_n^2\to \infty$, we have
		\begin{align*}
		&\E\left[\max_{1\le i< j\le n}|K_{ij,n}|\right]=\frac{1}{h_n}\E\left[\max_{1\le i< j\le n}K\left(\frac{x-X_{ij}}{h_n}\right)\right]= O(h_n^{-1}).
		\end{align*}

	Now suppose the DGP is degenerate at $x$, then by Jensen's inequality
\begin{align*}
E\left[\max_{1\le i \le n}|V_i(\theta)|\right]\le& \left(E\left[\max_{1\le i \le n}V_i(\theta)^2\right]\right)^{1/2}
\le \left(E\left[\sum_{i=1}^n V_i(\theta)^2\right]\right)^{1/2}\\
\le& n^{1/2}  \left(E\left[\frac{1}{n}\sum_{i=1}^nV_i(\theta)^2\right]\right)^{1/2}
= n^{1/2}  \left(\frac{4\Omega_2(x)(1+o(1))}{nh_n}\right)^{1/2}=O(h_n^{-1/2}),
\end{align*}
where the second to the last equality follows from Lemma \ref{lem:V2}.
	\end{proof}
\end{lemma}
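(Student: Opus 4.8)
\emph{Proof proposal.} The plan is to handle the non-degenerate and degenerate regimes separately, after one common algebraic reduction. Splitting the double sum defining $\hat\theta$ into the pairs containing vertex $i$ and those not containing it gives, for each $i$, the identity $S(\theta)=\tfrac{2}{n(n-1)}\sum_{j\ne i}(K_{ij,n}-\theta)+\tfrac{n-2}{n}S^{(i)}(\theta)$. Plugging this into $V_i(\theta)=nS(\theta)-(n-1)S^{(i)}(\theta)$ collapses the $S^{(i)}$ terms and yields $V_i(\theta)=\tfrac{2}{n-1}\sum_{j\ne i}(K_{ij,n}-\theta)-S^{(i)}(\theta)$, which is the workhorse expression for both cases.

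For the non-degenerate case I would simply use the crude deterministic control afforded by the kernel. Each summand of $\tfrac{2}{n-1}\sum_{j\ne i}(K_{ij,n}-\theta)$ is bounded by $\max_{1\le k<l\le n}|K_{kl,n}|+|\theta|$, and $S^{(i)}(\theta)=\hat\theta^{(i)}-\theta$ is an average of the same $K_{kl,n}$ shifted by $\theta$, so it obeys the same bound. Since $\theta=f(x)=O(1)$ and, by Assumption \ref{a:rates}(iii), $\|K\|_\infty=O(1)$ so that $\max_{k<l}|K_{kl,n}|\le h_n^{-1}\|K\|_\infty=O(h_n^{-1})$, we get $\max_{1\le i\le n}|V_i(\theta)|\lesssim \max_{k<l}|K_{kl,n}|+O(1)=O_P(h_n^{-1})$ (in fact deterministically $O(h_n^{-1})$). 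No maximal inequality is needed; this crude rate is all the proof of Theorem \ref{thm:asymptotic_dist} requires.

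For the degenerate case the $O(h_n^{-1})$ bound is too lossy because the leading (Hájek) linear part of $V_i(\theta)$ now has asymptotically vanishing variance, so $V_i(\theta)$ is governed by a second-order remainder that is small in mean square. The natural route is therefore via a moment bound: by Jensen's inequality $\E\big[\max_i|V_i(\theta)|\big]\le \big(\E\big[\max_i V_i(\theta)^2\big]\big)^{1/2}\le \big(\E\big[\sum_i V_i(\theta)^2\big]\big)^{1/2}=n^{1/2}\big(\E\big[\tfrac1n\sum_i V_i(\theta)^2\big]\big)^{1/2}$, and then one invokes the companion second-moment computation (Lemma \ref{lem:V2}) giving $\E[\tfrac1n\sum_i V_i(\theta)^2]=\tfrac{4\Omega_2(x)(1+o(1))}{nh_n}$ under degeneracy, with $\Omega_2(x)=f(x)\int K^2$. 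This delivers $\E[\max_i|V_i(\theta)|]\le n^{1/2}\big(\tfrac{4\Omega_2(x)(1+o(1))}{nh_n}\big)^{1/2}=O(h_n^{-1/2})$, and Markov's inequality upgrades it to $\max_i|V_i(\theta)|=O_P(h_n^{-1/2})$.

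The main obstacle is the degenerate-case second-moment estimate itself, $\E[\tfrac1n\sum_i V_i(\theta)^2]\asymp (nh_n)^{-1}$: one must expand $V_i(\theta)$, condition on the vertex-specific shocks $U_i$, and verify that in the degenerate regime the would-be $O_P(1)$ linear term contributes only $o\big((nh_n)^{-1}\big)$ to the variance, so that the kernel-diagonal term of order $(nh_n)^{-1}$ dominates; this uses the smoothness in Assumption \ref{a:rates}(ii), a change of variables in $\int h_n^{-2}K^2((x-\cdot)/h_n)$, and $nh_n\to\infty$. The $\sqrt n$ factor lost in the Jensen step is exactly recovered, so the argument is tight, and only the routine checks that $\theta$ and $K$ are uniformly $O(1)$ are needed to make the non-degenerate half clean.
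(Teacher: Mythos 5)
Your proposal is correct and follows essentially the same route as the paper: the same leave-one-out identity reducing $V_i(\theta)$ to $\tfrac{2}{n-1}\sum_{j\ne i}(K_{ij,n}-\theta)-S^{(i)}(\theta)$, the crude $\max_{k<l}|K_{kl,n}|\lesssim h_n^{-1}\|K\|_\infty$ bound in the non-degenerate case, and the Jensen-plus-Lemma~\ref{lem:V2} second-moment argument in the degenerate case. Your observation that the non-degenerate bound is in fact deterministic is a minor (and clean) sharpening of the paper's step of taking expectations of the maximum, but it does not change the argument.
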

}

\begin{lemma}\label{lem:third_moment}
	Under Assumptions \ref{a:rates} and \ref{a:JEL_complete}, we have $n^{-1}\sumi |V_i(\theta)|^3=o_P(n^{1/2})$ under non-degeneracy {  and $n^{-1}\sumi |V_i(\theta)|^3=O_P\left((nh_n)^{-3/2}\right)$ under degeneracy.}
	\begin{proof}
		Under non-degeneracy, it follows from Lemma \ref{lem:max_V}, Lemma \ref{lem:V2}, and Assumption \ref{a:JEL_complete} (ii) that
		\begin{align*}
		\frac{1}{n}\sumi |V_i(\theta)|^3\le \max_{1\le i \le n}|V_i(\theta)|\cdot\frac{1}{n}\sumi V_i(\theta)^2=O_P(h_n^{-1})\cdot O_P(1)=o_P(n^{1/2}).
		\end{align*}
		{ 
		Under degeneracy, following the decomposition  in the proof of Lemma \ref{lem:max_V}, we have
		\begin{align*}
		|V_i(\theta)|\lesssim \left|\frac{2}{(n-1)}\sum_{j\ne i}^n(K_{ij,n}-\theta)\right| +\left|S^{(i)}(\theta)\right|\lesssim O_P\left(\frac{1}{\sqrt{nh_n}}\right)+ O_P\left(\frac{1}{\sqrt{n^2h_n}}\right),
		\end{align*}
		where the last inequality follows from the $\sqrt{nh_n}$-asymptotic normality of kernel estimation for independent data, and the $\sqrt{n^2h_n}$-asymptotic normality of $S^{(i)}(\theta)=\hat \theta^{(i)}-\theta$ under degeneracy from Section 4 in \cite{GrahamNiuPowell2019}.
		Thus we have
	 $n^{-1}\sumi |V_i(\theta)|^3=O_P\left((nh_n)^{-3/2}\right)$.
		}
	\end{proof}
\end{lemma}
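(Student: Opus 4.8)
The plan is to treat the two regimes separately, in both cases starting from the Hoeffding-type decomposition of the pseudo-value used in the proof of Lemma \ref{lem:max_V}: from $S(\theta)=\frac{2}{n(n-1)}\sum_{j\ne i}(K_{ij,n}-\theta)+\frac{n-2}{n}S^{(i)}(\theta)$ one gets the exact identity $V_i(\theta)=\frac{2}{n-1}\sum_{j\ne i}(K_{ij,n}-\theta)-S^{(i)}(\theta)$. In the non-degenerate case the crude bound suffices: using $\frac{1}{n}\sumi|V_i(\theta)|^3\le\big(\max_{1\le i\le n}|V_i(\theta)|\big)\cdot\frac{1}{n}\sumi V_i(\theta)^2$, together with $\max_i|V_i(\theta)|=\Op(h_n^{-1})$ from Lemma \ref{lem:max_V} and the second-moment bound $\frac{1}{n}\sumi V_i(\theta)^2=\Op(1)$ of Lemma \ref{lem:V2}, we obtain $\frac{1}{n}\sumi|V_i(\theta)|^3=\Op(h_n^{-1})$; it then remains only to note that Assumption \ref{a:JEL_complete}(ii) gives $nh_n^2\to\infty$, so $h_n^{-1}=o(n^{1/2})$ and the right-hand side is $\op(n^{1/2})$.

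The degenerate case is the main obstacle, precisely because the crude product bound is too lossy: there $\max_i|V_i(\theta)|=\Op(h_n^{-1/2})$ and $\frac{1}{n}\sumi V_i(\theta)^2=\Op((nh_n)^{-1})$, so the product is only $\Op(n^{-1}h_n^{-3/2})$, a factor $n^{1/2}$ larger than the asserted $\Op((nh_n)^{-3/2})$. Instead I would exploit the decomposition term by term, $|V_i(\theta)|\lesssim\big|\frac{2}{n-1}\sum_{j\ne i}(K_{ij,n}-\theta)\big|+|S^{(i)}(\theta)|$. Conditionally on $U_i$, the first summand is an average of $n-1$ i.i.d.\ terms whose conditional mean is $f_{X_{12}|U_1}(x|U_i)-f(x)+O(h_n^2)$ — which vanishes up to $O(h_n^2)$ under degeneracy since then $f_{X_{12}|U_1}(x|U_1)$ is a.s.\ equal to $f(x)$ — and whose conditional variance is $O((nh_n)^{-1})$ because each $K_{ij,n}$ contributes an $h_n^{-1}$; hence this summand is $\Op((nh_n)^{-1/2})$ by the $\sqrt{nh_n}$-asymptotics of kernel estimation for independent data. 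The second summand, $S^{(i)}(\theta)=\hat\theta^{(i)}-\theta$, is a degenerate dyadic KDE on $n-1$ vertices, so it is $\Op((n^2h_n)^{-1/2})$ by Section 4 of \cite{GrahamNiuPowell2019}. Combining yields $|V_i(\theta)|=\Op((nh_n)^{-1/2})$, whence $|V_i(\theta)|^3=\Op((nh_n)^{-3/2})$ and, after averaging, the claim.

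The one technical point to be careful about is converting the per-$i$ stochastic bounds into a bound on the average without manipulating $O_P$'s index by index. The clean route is to argue at the level of moments: bound $\E\big[\frac{1}{n}\sumi|V_i(\theta)|^3\big]\le\frac{1}{n}\sumi\E|V_i(\theta)|^3$, estimate $\E|V_i(\theta)|^3$ by the triangle inequality applied to the above decomposition together with Rosenthal-type third-moment bounds for the (conditional) kernel sum — using $\E|K_{ij,n}|^3=O(h_n^{-2})$, and checking that the leftover pieces $n^{-2}h_n^{-2}$, $h_n^6$, and $(n^2h_n)^{-3/2}$ are all $o((nh_n)^{-3/2})$ under $nh_n^2\to\infty$ and $nh_n^{5/2}\to0$ — so that $\E|V_i(\theta)|^3\lesssim(nh_n)^{-3/2}$ uniformly in $i$, and then conclude by Markov's inequality. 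This also makes the degeneracy hypothesis enter transparently, through the near-vanishing of $\E[K_{ij,n}\mid U_i]-\theta$.
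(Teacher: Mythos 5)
Your proposal is correct and follows essentially the same route as the paper's proof: the crude bound $\max_i|V_i(\theta)|\cdot n^{-1}\sum_i V_i(\theta)^2$ via Lemmas \ref{lem:max_V} and \ref{lem:V2} in the non-degenerate case, and the term-by-term decomposition $V_i(\theta)=\tfrac{2}{n-1}\sum_{j\ne i}(K_{ij,n}-\theta)-S^{(i)}(\theta)$ with the rates $(nh_n)^{-1/2}$ and $(n^2h_n)^{-1/2}$ in the degenerate case. Your additional moment/Markov argument for passing from the per-$i$ bounds to the averaged third moment is a sound way to make rigorous a step the paper treats informally.
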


\begin{lemma}\label{lem:lambda_conv}
{ 	Suppose the assumptions of Theorem \ref{thm:asymptotic_dist} are satisfied, then
	$\hat \lambda=\Op(n^{-1/2})$ under non-degeneracy and $\hat \lambda=\Op(h_n^{1/2})$ under degeneracy.}
\end{lemma}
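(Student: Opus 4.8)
The plan is to follow the standard empirical-likelihood argument of Owen, applied to the pseudo-values $V_i(\theta)$, and to track carefully how the bandwidth enters the bounds so as to obtain the two different rates. Recall $\hat\lambda$ is the maximizer in the dual problem $\ell(\theta)=2\sup_\lambda\sum_{i=1}^n\log(1+\lambda V_i(\theta))$, so it satisfies the first-order condition
\begin{align*}
0=\frac{1}{n}\sum_{i=1}^n\frac{V_i(\theta)}{1+\hat\lambda V_i(\theta)}.
\end{align*}
The basic inputs are: (a) $\frac{1}{n}\sum_i V_i(\theta)^2=\Op(1)$ under non-degeneracy and $=\Op((nh_n)^{-1})$ under degeneracy, which is Lemma \ref{lem:V2} (together with $\Omega_2(x)\asymp f(x)$ finite); (b) $\bar V:=\frac{1}{n}\sum_i V_i(\theta)=S(\theta)-\frac{1}{n}\sum_i(S^{(i)}(\theta)-S(\theta))\cdot n$-type expression, which by the Hoeffding/Hájek projection of $\hat\theta$ is $\Op(n^{-1/2})$ under non-degeneracy and, being essentially the linear (Hájek) part of $\hat\theta-\theta$ plus lower-order terms, is $\op$ of the degenerate scale — more precisely $\bar V=\Op(n^{-1})$ under degeneracy since the linear part vanishes; and (c) the maximal bound $\max_i|V_i(\theta)|=\Op(h_n^{-1})$ (resp. $\Op(h_n^{-1/2})$) from Lemma \ref{lem:max_V}.

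First I would write the standard decomposition obtained by multiplying the first-order condition by $\hat\lambda$ and rearranging:
\begin{align*}
\bar V=\hat\lambda\cdot\frac{1}{n}\sum_{i=1}^n\frac{V_i(\theta)^2}{1+\hat\lambda V_i(\theta)},
\end{align*}
so that
\begin{align*}
|\hat\lambda|\cdot\left(\frac{1}{n}\sum_{i=1}^n V_i(\theta)^2-|\hat\lambda|\max_i|V_i(\theta)|\cdot\frac{1}{n}\sum_{i=1}^n V_i(\theta)^2\right)\le|\bar V|,
\end{align*}
which, once one shows $|\hat\lambda|\max_i|V_i(\theta)|=\op(1)$, yields $|\hat\lambda|\lesssim|\bar V|/\big(\frac{1}{n}\sum_i V_i(\theta)^2\big)$ up to $(1+\op(1))$ factors. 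Plugging in the orders from (a) and (b): in the non-degenerate case $|\hat\lambda|\lesssim \Op(n^{-1/2})/\Op(1)=\Op(n^{-1/2})$; in the degenerate case $|\hat\lambda|\lesssim \Op(n^{-1})/\Op((nh_n)^{-1})=\Op(h_n)$. Wait — the claimed rate under degeneracy is $\Op(h_n^{1/2})$, not $\Op(h_n)$, so I would instead bound $\bar V$ by the cruder $\Op((n^2h_n)^{-1/2})$ (the degenerate rate of $\hat\theta-\theta$ minus bias, cf. Section 4 of \cite{GrahamNiuPowell2019}), giving $|\hat\lambda|\lesssim \Op((n^2h_n)^{-1/2})/\Op((nh_n)^{-1})=\Op((h_n)^{1/2})$, which is the stated bound. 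So the precise choice of which bound on $\bar V$ to feed in is what delivers the exponent.

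Before that conclusion is legitimate I must first verify the side condition $|\hat\lambda|\max_i|V_i(\theta)|=\op(1)$, which is the usual bootstrap/self-improvement step: one shows $|\hat\lambda|=\Op(\rho_n)$ for some preliminary rate $\rho_n$ by the same inequality but using only $\max_i|V_i|$ and $\bar V$ crudely, checks $\rho_n\max_i|V_i(\theta)|=\op(1)$ using Lemma \ref{lem:max_V} together with the bandwidth conditions in Assumption \ref{a:JEL_complete}(ii) (this is exactly where $nh_n^2\to\infty$ enters — it makes $n^{-1/2}\cdot h_n^{-1}\to 0$ in the non-degenerate case, and analogously $(n^2h_n)^{-1/2}h_n^{-1/2}=(n^2h_n^2)^{-1/2}\to 0$ in the degenerate case), and then bootstraps back. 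The main obstacle is thus bookkeeping rather than any deep idea: getting the right order for $\bar V=\frac{1}{n}\sum_i V_i(\theta)$ in the degenerate regime, since $V_i(\theta)$ combines $S(\theta)$ and $S^{(i)}(\theta)$ and one must confirm that its average is $\Op((n^2h_n)^{-1/2})$ and that the remainder/linearization terms are negligible at that scale — this is where the decomposition $S(\theta)=\frac{2}{n(n-1)}\sum_{j\ne i}(K_{ij,n}-\theta)+\frac{n-2}{n}S^{(i)}(\theta)$ from the proof of Lemma \ref{lem:max_V} and the asymptotic-normality facts cited from \cite{GrahamNiuPowell2019} do the work. Everything else is a verbatim instance of Owen's Lemma with the moment and maximal bounds supplied by Lemmas \ref{lem:V2}, \ref{lem:max_V}, and \ref{lem:third_moment}.
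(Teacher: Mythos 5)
Your proposal is correct and follows essentially the same route as the paper: the Owen-type first-order-condition inequality, with $\frac{1}{n}\sum_i V_i(\theta)=S(\theta)$ bounded by $O_P(n^{-1/2})$ (non-degenerate) or $O_P((n^2h_n)^{-1/2})$ (degenerate), the second-moment lower bounds from Lemma \ref{lem:V2}, and the maximal bounds from Lemma \ref{lem:max_V}, with $nh_n^2\to\infty$ controlling the cross term. One small correction: your intermediate claim that $\bar V=O_P(n^{-1})$ under degeneracy is not right --- the linear part vanishes but the conditionally independent $R_{ij}$ part of the Hoeffding decomposition contributes $O_P((n^2h_n)^{-1/2})\gg n^{-1}$, so the bound you switch to is not ``cruder'' but is in fact the correct order, and it is exactly the one the paper feeds into the same inequality.
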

\begin{proof}[Proof of Lemma \ref{lem:lambda_conv}]
	The proof follows from the same strategy as in the proof of \citet[Equation (2.14)]{Owen1990}. The first-order condition yields
	\begin{align*}
	0=\frac{1}{n}\sumi \frac{V_i(\theta)}{1+\hat \lambda V_i(\theta)}.
	\end{align*}
	Denote $\hat\lambda=\rho\beta$, where $\rho=|\hat \lambda| \ge 0$ and $|\beta|=1$. Then
	\begin{align*}
	0=\left|\frac{1}{n}\sumi \frac{V_i(\theta)}{1+\rho\beta V_i(\theta)}\right|
	=
	\frac{1}{n}\left|\sumi V_i(\theta)-\sumi \frac{\rho\beta V_i(\theta)^2}{1+\rho\beta V_i(\theta)}\right|
	\ge \rho\frac{ \frac{1}{n}\sumi V_i(\theta)^2}{1+\rho \max_{1\le i \le n}|V_i(\theta)|}-\left|S(\theta)\right|.
	\end{align*}
	
We now consider two asymptotic regimes separately.
First, under non-degeneracy,
	note that $\left|S(\theta)\right|=O_P(n^{-1/2})$ following the uniform convergence rate from Theorem \ref{thm:unif_rate}. Further, $n^{-1}\sumi V_i(\theta)^2$ is bounded away from zero with probability $1-o(1)$  following Lemma \ref{lem:V2}. In addition, we have $\max_{1\le i \le n}|V_i(\theta)|=o_P(n^{1/2})$ following Lemma \ref{lem:max_V}. Therefore, $\rho=|\hat \lambda|=\Op(n^{-1/2})$.
	
	{  On the other hand, under degeneracy,
$\left|S(\theta)\right|=O_P((n^2h_n)^{-1/2})$ following the variance calculations in Section 3 in \cite{GrahamNiuPowell2019}. Also, $(nh_n)\cdot n^{-1}\sumi V_i(\theta)^2$ is bounded away from zero with probability $1-o(1)$ following Lemma \ref{lem:V2}. In addition, we have $\max_{1\le i \le n}|V_i(\theta)|=O_P(h_n^{-1/2})$ following Lemma \ref{lem:max_V}. Then
\begin{align*}
\frac{ \rho}{1+ O_P(\rho h_n^{-1/2})}= O_P\left(\left|S(\theta)\right|\cdot (nh_n)\right)= O_P\left(h_n^{1/2}\right).
\end{align*}
Thus we have $\rho=|\hat \lambda|=\Op(h_n^{1/2})$.
	 }
\end{proof}	


\begin{lemma}\label{lem:V2}
Under Assumptions \ref{a:rates} and \ref{a:JEL_complete}, denote $\sigma^2=4\Omega_1(x)+{ 2}\Omega_2(x)/(nh_n)$, we have,
under non-degeneracy, 
	\begin{align*}
	\frac{1}{n\sigma^2}\sumi V_i(\theta)^2= \frac{4\Omega_1(x)}{\sigma^2}+o_P(1),
	\end{align*}
	and under degeneracy,
	\begin{align*}
	\frac{1}{n\sigma^2}\sumi V_i(\theta)^2=\frac{{ 4} \Omega_2(x)}{\sigma^2 nh_n} +o_P(1).
	\end{align*}
\begin{proof}
First, note that the following algebraic facts hold:  $V_i=S-(n-1)(S^{(i)}-S)$ and
	\begin{align}
	\frac{1}{n}\sumi S^{(i)}(\theta)=\frac{2}{n(n-1)(n-2)}\sumi \left(\sum_{\iota=1}^{n-1}\sum_{j>\iota}K_{\iota j,n}-\sum_{j> i}K_{ij,n}-\sum_{\iota< i}K_{\iota i,n}\right)-\theta=S(\theta). \label{eq:leave-one-out_trick}
	\end{align}
 Following the decomposition of Lemma 3 in \cite{MatsushitaOtsu2020},
	\begin{align*}
	\frac{1}{n\sigma^2}\sumi V_i(\theta)^2=&\frac{S(\theta)^2}{\sigma^2} - 2(n-1) S(\theta)\frac{1}{n\sigma^2} \sumi \{S^{(i)}(\theta)-S(\theta)\} +(n-1)^2\frac{1}{n\sigma^2}\sumi\{S^{(i)}(\theta)-S(\theta)\}^2\\
	=&T_1-2T_2+T_3.
	\end{align*}
	Observe that $T_1=o_P(1)$ following the uniform convergence rate from Theorem \ref{thm:unif_rate} and $T_2=0$ using Equation (\ref{eq:leave-one-out_trick}). Now, following the same decomposition of (\ref{eq:Hoeffding_decomposition}) in the proof of Theorem \ref{thm:unif_rate},
	we have
	\begin{align}
	S^{(i)}=\theta^{(i)}-\theta=\frac{1}{n-1}\sum_{k\ne i}L_k+\frac{2}{(n-1)(n-2)}\sum_{k\ne i }\sum_{l>k, l\ne i}(W_{kl}+R_{kl}) +\text{Bias}(\hat \theta^{(i)}).\label{eq:Hoeffding_decomposition_leave-one-out}
	\end{align}
	In addition, notice that $ \sum_{k\ne i }\sum_{l>k, l\ne i}R_{kl}=\sum_{k=1 }^{n-1}\sum_{l= k+1}^nR_{kl}-\sum_{l> i}R_{il}-\sum_{k< i}R_{ki}.$
	Hence, using Equation (\ref{eq:leave-one-out_trick}) and the variance estimator expressed as a $U$-statistic, we have
	\begin{align*}
	T_3=&(n-1)^2\frac{1}{n\sigma^2}\sumi\{S^{(i)}(\theta)-S(\theta)\}^2\\
	=&(n-1)^2\frac{1}{n\sigma^2}\sumi\left\{S^{(i)}(\theta)-\frac{1}{n}\sumi S^{(i)}(\theta)\right\}^2\\
	=&(n-1)^2\frac{1}{n^2\sigma^2}\sum_{i=1}^{n-1}\sum_{j=i+1}^n\left\{S^{(i)}(\theta)- S^{(j)}(\theta)\right\}^2\\
	=&(n-1)^2\frac{1}{n^2\sigma^2}\sum_{i=1}^{n-1}\sum_{j=i+1}^n\left\{\frac{1}{n-1}(L_j-L_i )+\frac{2}{(n-1)(n-2)}\sum_{l\ne i, j}\{R_{jl} -R_{il}\} \right\}^2+o_P(1)\\
	=&\frac{1}{\sigma^2}\left\{\Var(L_1)+\frac{ { 4} }{n}\Var(R_{12})\right\}+o_P(1)=\frac{1}{\sigma^2}\left\{4\Omega_1(x)+\frac{ { 4} }{n h_n }\Omega_2(x)\right\}+o_P(1),
	\end{align*}
	where the third equality follows from \citet[pp. 589]{EfronStein1981}, the fourth follows from  the decomposition of Equation (\ref{eq:Hoeffding_decomposition_leave-one-out}), the fact that $\text{Bias}(\hat \theta^{(i)})=\text{Bias}(\hat \theta^{(j)})$ (as they utilize the same bandwidth) and the uniform rate of Theorem \ref{thm:unif_rate}, and the second to the last by the calculation
	\begin{align*}
&\frac{(n-1)^2}{n^2\sigma^2}\sum_{i=1}^{n-1} \sum_{j=i+1}^n \left(\frac{2}{(n-1)(n-2)}\sum_{l\ne i, j}\{R_{jl} -R_{il}\} \right)^2\\
=&
\frac{(n-1)}{2n(n-2)\sigma^2}\E\left[ { 4}(R_{23} -R_{13})^2\right]+o_P(1)
=
\frac{{ 4}}{n\sigma^2}\E\left[ \frac{1}{2}(R_{23} -R_{13})^2\right]+o_P(1)=\frac{ { 4} }{n\sigma^2}\Var(R_{12})+o_P(1),
	\end{align*}
	 following the WLLN for triangular arrays, the conditional independence (on $(U_i)_{1\le i \le n}$) of $(R_{ij})_{(i,j)\in\Ink}$,  and the law of total variance.
	Hence, under non-degeneracy, we have
	$
	n^{-1}\sigma^{-2}\sumi V_i(\theta)^2= 4\Omega_1(x)/\sigma^2+o_P(1),
	$
	and under degeneracy,
	$
		n^{-1}\sigma^{-2}\sumi V_i(\theta)^2={ 4} \Omega_2(x)/(\sigma^2nh_n)+o_P(1).
	$
\end{proof}
\end{lemma}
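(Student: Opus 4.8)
The plan is to adapt the Hoeffding-type decomposition of $\tfrac1n\sumi V_i(\theta)^2$ used for network edge probabilities in \cite{MatsushitaOtsu2020} (their Lemma~3), replacing the edge indicator kernel there by the bandwidth-dependent kernel $K_{ij,n}$ and keeping careful track of the $h_n^{-1}$ variance inflation. Writing $V_i(\theta)=S(\theta)-(n-1)\{S^{(i)}(\theta)-S(\theta)\}$ and squaring, I split
\begin{align*}
\frac{1}{n\sigma^2}\sumi V_i(\theta)^2
&=\frac{S(\theta)^2}{\sigma^2}-\frac{2(n-1)S(\theta)}{n\sigma^2}\sumi\bigl\{S^{(i)}(\theta)-S(\theta)\bigr\}\\
&\qquad+\frac{(n-1)^2}{n\sigma^2}\sumi\bigl\{S^{(i)}(\theta)-S(\theta)\bigr\}^2
=:T_1-2T_2+T_3.
\end{align*}
The combinatorial identity $\tfrac1n\sumi S^{(i)}(\theta)=S(\theta)$ --- each $K_{kl,n}$ is retained by exactly $n-2$ of the $n$ leave-one-out estimators, Equation~(\ref{eq:leave-one-out_trick}) --- forces $T_2=0$ and, after recentering, lets me apply the identity $\tfrac1n\sumi(a_i-\bar a)^2=\tfrac1{n^2}\sum_{i<j}(a_i-a_j)^2$ (\citet[p.~589]{EfronStein1981}) to rewrite $T_3=\tfrac{(n-1)^2}{n^2\sigma^2}\sum_{i<j}\{S^{(i)}(\theta)-S^{(j)}(\theta)\}^2$, thereby differencing away the possibly non-pivotal ``common'' part of the leave-one-out estimators. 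For $T_1$ I would use $S(\theta)=O_P(n^{-1/2})$ under non-degeneracy (Theorem~\ref{thm:unif_rate}) and $S(\theta)=O_P((n^2h_n)^{-1/2})$ under degeneracy (the variance calculations of Section~3 in \cite{GrahamNiuPowell2019}) together with $\text{Bias}(\hat\theta)=O(h_n^2)$; since $\sigma^2\asymp1$ in the first regime and $\sigma^2\asymp(nh_n)^{-1}$ in the second, and the restriction $nh_n^{5/2}\to0$ of Assumption~\ref{a:JEL_complete}(ii) renders $h_n^2$ negligible against $(n^2h_n)^{-1/2}$, one obtains $T_1=o_P(1)$ in both cases.

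The substance is $T_3$. Into $S^{(i)}(\theta)-S^{(j)}(\theta)$ I would substitute the Hoeffding decomposition~(\ref{eq:Hoeffding_decomposition_leave-one-out}) of $\hat\theta^{(i)}$ established while proving Theorem~\ref{thm:unif_rate}: a linear (H\'ajek) term $\tfrac1{n-1}\sum_{k\ne i}L_k$ with $\Var(L_1)=4\Omega_1(x)$, a smooth degenerate quadratic term in $W_{kl}$, a degenerate quadratic term in $R_{kl}$ with $\Var(R_{12})=\Omega_2(x)/h_n$ carrying the $h_n^{-1}$ spike, and a bias term. The bias terms cancel (identical bandwidth) and the full-sum parts of the $W$- and $R$-terms cancel, leaving $S^{(i)}(\theta)-S^{(j)}(\theta)=\tfrac1{n-1}(L_j-L_i)+\tfrac{2}{(n-1)(n-2)}\sum_{l\ne i,j}(R_{jl}-R_{il})$ up to a lower-order $W$-difference term. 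Expanding the square and summing over $i<j$: the $L$-part is $\tfrac1{n^2}\sum_{i<j}(L_j-L_i)^2\to\Var(L_1)=4\Omega_1(x)$ by the law of large numbers for i.i.d.\ averages; the $R$-part, using that $(R_{kl})_{(k,l)\in I_n}$ are conditionally independent and mean zero given $(U_i)_{i\le n}$, a weak law for triangular arrays, and the law of total variance, is $\tfrac4n\Var(R_{12})+o_P(1)=\tfrac{4\Omega_2(x)}{nh_n}+o_P(1)$; and the $W$-difference together with all cross terms is $O_P(n^{-1})$, hence $o_P(\sigma^2)$ in both regimes. Collecting, $\tfrac{1}{n\sigma^2}\sumi V_i(\theta)^2=\sigma^{-2}\bigl\{4\Omega_1(x)+4\Omega_2(x)/(nh_n)\bigr\}+o_P(1)$; under non-degeneracy $4\Omega_2(x)/(nh_n)=o(1)$ (as $nh_n\to\infty$ and $\sigma^2\asymp1$), leaving $4\Omega_1(x)/\sigma^2+o_P(1)$, whereas under degeneracy $\Omega_1(x)=0$, leaving $4\Omega_2(x)/(\sigma^2 nh_n)+o_P(1)$, exactly the claim.

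The step I expect to be the main obstacle is the concentration of the $R$-part of $T_3$. Although the $R_{kl}$ are conditionally independent given the vertex shocks, each has variance of order $h_n^{-1}$, so showing that $\tfrac{4}{n^2(n-2)^2}\sum_{i<j}\bigl(\sum_{l\ne i,j}(R_{jl}-R_{il})\bigr)^2$ actually settles at $\tfrac4n\Var(R_{12})$ --- rather than being swamped by its own fluctuations --- requires the bandwidth lower bound $nh_n^2\to\infty$ of Assumption~\ref{a:JEL_complete}(ii) together with a careful conditional second-moment/weak-law argument. The accompanying delicate point is to verify, uniformly over both the non-degenerate and degenerate regimes, that the $L$--$R$ and $W$--$R$ cross terms and the $W$-difference are genuinely $o_P(\sigma^2)$; this forces one to carry along the distinction between $\sigma^2\asymp1$ and $\sigma^2\asymp(nh_n)^{-1}$ throughout the argument.
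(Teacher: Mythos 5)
Your proposal is correct and follows essentially the same route as the paper's proof: the same $T_1-2T_2+T_3$ split, the same use of the leave-one-out averaging identity and the Efron--Stein pairwise-difference identity, and the same substitution of the Hoeffding decomposition into $S^{(i)}-S^{(j)}$ with $\Var(L_1)=4\Omega_1(x)$ and $\Var(R_{12})=\Omega_2(x)/h_n(1+o(1))$ yielding $\sigma^{-2}\{4\Omega_1(x)+4\Omega_2(x)/(nh_n)\}+o_P(1)$. The concentration of the $R$-part that you flag as the main obstacle is handled in the paper exactly as you anticipate, via conditional independence of $(R_{ij})$ given $(U_i)$, a WLLN for triangular arrays, and the law of total variance.
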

\section{Proofs of the main results}
Throughout this appendix, for notational convenience, for any $i>j$, let us define $X_{ij}:=X_{ji}$.
\subsection{Proof of Theorem \ref{thm:unif_rate}}
\begin{proof}\label{proof:unif_rate}
	Note that $\hat f_n(x)- f(x)=(\E[\hat f_n(x)]+f(x))+(\hat f_n(x)-\E[\hat f_n(x)])$.
	The bound for the non-stochastic component follows from the identical distributions of $X_{ij}$ and a direct calculation
	\begin{align*}
	\E[\hat f_n(x)]=\E\left[\frac{1}{h_n}K\left(\frac{X_{12}-x}{h_n}\right)\right]=\int K(u)duf(x-uh_n)du=f(x)+h_n^2B(x) + o(h_n^2).
	\end{align*}
	where
	$
	B(x):=2^{-1}f''(x) \int u^2 K(u)du
	$.
	Thus $\text{Bias}(\hat f_n(x)):=\E[\hat f_n(x)-f(x)]=h_n^2B(x)+o(h_n)=O(h_n^2)$.	
	
	We shall now consider the stochastic component.	
	For each $n\in\N$, write $\calG_n=\{x'\mapsto K((x-x')/h_n): x\in \R\}$. Then, following the bias calculation above, one has $\sup_{g\in\calG_n}\E[g^2(X_{12})]\le \|f\|_\calX h_n+O(h_n^3)=O( h_n)$. and for each $x\in\R$, there exists a $g\in \calG_n$ such that 
	\begin{align*}
	\hat f_{n}(x)={n\choose 2}^{-1}\sum_{\ijI}\frac{1}{h_n}\cdot g(X_{ij}).
	\end{align*}
	Denote $g_1(\cdot)=\E[g(X_{12})|U_1=\cdot]$.
	The decomposition in \cite{GrahamNiuPowell2019} (Equations (9)-(11)) yields
	\begin{align}
	&{n\choose 2}^{-1}\sum_{\ijI} \frac{1}{h_n}\cdot (g(X_{ij})-\E[g(X_{12})])\nonumber\\
	=&\frac{2}{n}\sumi \frac{1}{h_n}(g_1(U_i) -\E[g(X_{12})])\nonumber\\
	&+\frac{2}{n(n-1)}\sum_{\ijI}\frac{1}{h_n}\left(\E[g(X_{ij})|U_i,U_j] -g_1(U_i) -g_1(U_j)+\E[g(X_{12})]\right)\nonumber\\
	&+\frac{2}{n(n-1)}\sum_{\ijI}\frac{1}{h_n}\left(g(X_{ij})-\E[g(X_{ij})|U_i,U_j]\right)\nonumber\\
	=&\frac{1}{n}\sumi L_i(g) +\frac{2}{n(n-1)}\sum_\ijI W_{ij}(g)+\frac{2}{n(n-1)}\sum_\ijI  R_{ij}(g).\label{eq:Hoeffding_decomposition} 
	\end{align}
	Note that each of the three terms on the right hand side is centered. It suffices to bound each terms on the right hand side of (\ref{eq:Hoeffding_decomposition}).\\
	\par
	We shall first obtain some preliminary results for the uniform covering numbers of the classes of functions in these terms.
	Observe that $L_i(g)=2\{\int K(z) f_{Y_{12}|U_1}(x+zh_n| U_i) dz-E[\int K(z) f_{Y_{12}|U_1}(x+zh_n| U_i) dz]\}$, where $g\in \calG_n$ corresponds to the design point $x$. Now, define the classes of functions with support $(z,u)\in\supp(K)\times [0,1]$ by
	\begin{align*}
	\calG_{2n}=&\left\{(z,u)\mapsto x+zh_n : x\in \calX\right\},\quad
	\calG_{3}=\left\{(z,u)\mapsto u \right\},\\
	\calG_{4n}=&\left\{(z,u)\mapsto f_{Y_{12}|U_1}(x+zh_n\mid u)  : x\in \calX\right\},\\
	\overline \calG_{4n}=&\left\{u\mapsto \int K(z) f_{Y_{12}|U_1}(x+zh_n| u) dz : x\in \calX\right\}.
	\end{align*} 
	Note that $\calG_{2n}$ is a subset of a two dimensional vector space with an envelope $M:=\text{radius}(\supp(K) )+\text{radius}(\calX )$. Thus by Lemma 23 in \cite{Kato2017lecture}, it is a VC-subgraph with VC-index $\le 5$ and hence there exist universal constants $A\ge 1$ and $v\ge 1$ that for any $\epsilon\in (0,1]$,
	\begin{align*}
	\sup_{Q}N(\calG_{2n} ,\|\cdot\|_{Q,2},\epsilon M)\le (A/\epsilon)^{5v}.
	\end{align*}
	Now, $\calG_{3}$ is a class consists of one single function with envelope $1$ and thus for any $\epsilon\in (0,1]$, one has $
	\sup_{Q}N(\calG_{3} ,\|\cdot\|_{Q,2},\epsilon )= 1$. Now, note that for any $(g_1,g_2), (g_1',g_2')\in \calG_{2n}\times \calG_{3}$ and $\phi(g_1,g_2)(z,u):=f_{Y_{12}|U_1}(g_1(z,u)\mid g_2(z,u))$, we have by Assumption \ref{a:rates} (ii) that
	\begin{align*}
	|\phi(g_1,g_2)(z,u)-\phi(g_1',g_2')(z,u)|^2\le \|f_{Y_{12}|U_1}'\|_\infty^2 |g_1(z,u)-g_1'(z,u)|^2=O(|g_1(z,u)-g_1'(z,u)|^2).
	\end{align*}
	Thus by Proposition 5 in \cite{Kato2017lecture}, it holds that
	\begin{align*}
		\sup_{Q}N(\calG_{4n} ,\|\cdot\|_{Q,2},\epsilon M)=	\sup_{Q}N(\phi(\calG_{2n},\calG_3) ,\|\cdot\|_{Q,2},\epsilon M)\lesssim (A/\epsilon)^{5v}.
	\end{align*}
	An application of Lemma \ref{lem:conditional_entropy} with $R= K$, $r=s=2$ then yields that for any $\epsilon\in (0,1]$, 
\begin{align*}
\sup_{Q}N(\overline \calG_{4n} ,\|\cdot\|_{Q,2},\epsilon M)\le \sup_{Q}N(\calG_{4n} ,\|\cdot\|_{Q,2},\epsilon^2 M/4) \lesssim (A/\epsilon)^{5v}=O(1).
\end{align*}
	Now, we come back to the first term of (\ref{eq:Hoeffding_decomposition}). First, note that this term is zero in the degenerate case. We then consider the non-degenerate case. Note that the summands are i.i.d. Thus using the entropy bounds from above and an application of Theorem 5.2 in \cite{CCK2014AoS} yield that
\begin{align*}
\E\left[\sup_{g\in\calG}\left|\frac{1}{n}\sumi\left( g_1(U_i)-\E[ g(X_{12})]\right)\right|\right]\lesssim \sqrt{\frac{M^2v\log A }{n}}+\frac{vM\log A}{n}.
\end{align*}
This provides the bound for the first term of (\ref{eq:Hoeffding_decomposition}) in the non-degenerate case.\\

	 Now, define the classes of functions with support $(z,u_1,u_2)\in\supp(K)\times [0,1]^2$ by
	\begin{align*}
	&\calH_{2n}=\left\{(z,u_1,u_2)\mapsto x+zh_n : x\in \calX\right\},\\
	&\calH_{3}=\left\{(z,u_1,u_2)\mapsto u_1 \right\},\quad \calH_{3}'=\left\{(z,u_1,u_2)\mapsto u_2 \right\},\\
	&\calH_{4n}=\left\{(z,u_1,u_2)\mapsto f_{Y_{12}|U_1,U_2}(x+zh_n\mid u_1,u_2)  : x\in \calX\right\},\\
	&\overline \calH_{4n}=\left\{(u_1,u_2)\mapsto \int K(z) f_{Y_{12}|U_1,U_2}(x+zh_n| u_1,u_2) dz : x\in \calX\right\}.
	\end{align*} 
	In addition, define for each $(g_1,g_2,g_2')\in \calH_{2n}\times \calH_{3}\times \calH_{3}'$ the transformation $\phi(g_1,g_2,g_3)(z,u_1,u_2)=f_{Y_{12}|U_1,U_2}(g_1|g_2,g_2')$.
	Under Assumption \ref{a:rates}, using the same argument as in the previous paragraph, we establish that for some universal constants $A,v\ge 1$,
\begin{align*}
\sup_{Q}N(\overline \calH_{4n} ,\|\cdot\|_{Q,2},\epsilon M) \lesssim (A/\epsilon)^{5v}=O(1).
\end{align*}
Coming back to the second term of (\ref{eq:Hoeffding_decomposition}), it consists of a centered, completely degenerate $U$-process (see Section 3.5 in \cite{delaPenaGine1999} for definition). By the above entropy calculations along with Corollary 7(i) in \cite{Kato2017lecture}, we have $\sup_{Q}N(\overline \calH_{4n}-2\overline \calG_{4n} ,\|\cdot\|_{Q,2},\epsilon 2 M)=O(1)$ for all $\epsilon\in(0,1]$. An upper bound can be subsequently obtained by applying Corollary 5.3 in \cite{ChenKato2019b},
	\begin{align*}
	\E\left[\sup_{g\in\calG}\left|\frac{2}{n(n-1)}\sum_{1\le i<j\le n} W_{ij}(g)\right|\right]\lesssim \frac{M  v\log A}{n}+\frac{M v^2\log^2  A}{n^{3/2}}.
	\end{align*}

	\par	
	{  For the third term of (\ref{eq:Hoeffding_decomposition}), note that conditional on $(U_i)_{i=1}^n$, $(h_n \cdot R_{ij}(g))_{(i,j)\in I_n}$ are centered, independent but not necessarily identically distributed random processes. Thus an application of Lemma B.2 in \cite{cattaneo2022} with similar entropy calculations as above yields
	\begin{align*}
	&\frac{1}{h_n}\cdot\E\left[\sup_{g\in\calG}\left|\frac{2}{n(n-1)}\sum_\ijI h_n R_{ij}(g)\right|\mid (U_i)_{i=1}^n\right]\\
	\lesssim& \sigma_{U}(g)\sqrt{\frac{v\log(A\|K\|_\infty/\sigma_{U}(g))}{n^2 h_n^2}}
	+\frac{v\log(A\|K\|_\infty/\sigma_{U}(g))\|K\|_\infty}{n^2h_n}.
	\end{align*}
	where $\sigma_{U}^2(g):=\sup_{g\in\calG_n}{n \choose 2}^{-1}\sum_{1\le i<j\le n}\E[(h_n R_{ij}(g))^2\mid (U_i)_{i=1}^n]=O(h_n)$ almost surely following Assumption \ref{a:rates} and can be taken to be strictly positive following the definition of $\sigma$ in Lemma B.1 in \cite{cattaneo2022}.} By Jensen's inequality and Fubini's theorem, we have
	\begin{align*}
	\E\left[\sup_{g\in\calG}\left|\frac{2}{n(n-1)}\sum_\ijI R_{ij}(g)\right|\right]\lesssim \sqrt{\frac{v\log(A\|K\|_\infty/h_n)}{n^2 h_n}}
	+\frac{v\log(A\|K\|_\infty/h_n)\|K\|_\infty}{n^2h_n}.
	\end{align*}
	This shows the convergence rate of the stochastic component. The convergence rates for the two cases then follow from combining the bounds.
\end{proof}
\subsection{Proof of Theorem \ref{thm:asymptotic_dist}}
\begin{proof}\label{proof:asymptotic_dist}
	In this proof, we write $L_i$, $R_{ij}$ and $W_{ij}$ for  $L_i(g)$, $R_{ij}(g)$ and $W_{ij}(g)$ in Equation (\ref{eq:Hoeffding_decomposition}) where $g$ corresponds to the design point of interest $x$. Also let us denote $\kappa_n=nh_n$ and recall
	\begin{align*}
	\Omega_1(x)=\Var\left(f_{X_{12}|U_1}(x|U_1)\right),\qquad\Omega_2(x)=f(x)\int K^2(u) du.
	\end{align*}
	The first part of this proof follows a standard argument for empirical likelihood. For $\hat{\lambda}:=\text{argsup}_{\lambda}\sumi \log(1+\lambda V_i(\theta))$, the first-order condition and the algebraic fact that $(1+x)^{-1}=1-x+x^2(1+x)^{-1}$ yield
	\begin{align*}
	0=\frac{1}{n}\sumi \frac{V_i(\theta)}{1+\hat\lambda V_i(\theta)}=\frac{1}{n}\sumi V_i(\theta)-\frac{1}{n}\sumi V_i(\theta)^2\hat\lambda + \frac{1}{n}\sumi \frac{V_i(\theta)^3\hat \lambda^2}{1+\hat\lambda V_i(\theta)}.
	\end{align*}
	 Lemmas \ref{lem:third_moment}, \ref{lem:lambda_conv}, and \ref{lem:V2} together imply that under both non-degeneracy and degeneracy, it holds that
	\begin{align*}
	\hat \lambda=\frac{\sumi V_i(\theta)}{\sumi V_i(\theta)^2}+o_P(n^{-1/2}).
	\end{align*}
	Thus a Taylor expansion of $\log(1+x )$ at $1$ gives
	\begin{align*}
	2\sumi \log(1+\hat \lambda V_i(\theta) )=2\sumi\left(\hat \lambda V_i(\theta)-\frac{1}{2}\left\{\hat \lambda V_i(\theta)\right\}^2\right)+\op(1)=\frac{\left\{\frac{1}{n}\sumi V_i(\theta)\right\}^2}{\frac{1}{n}\sumi V_i(\theta)^2} +\op(1).
	\end{align*}
	Note that $\Var(\hat\theta)=4\Omega_1(x)/n+{ 2}\Omega_2(x)/n^2h_n+O(n^{-2})+O( h_n^2/n)$ following \citet[pp. 12]{GrahamNiuPowell2019}. Let $\sigma^2=4\Omega_1(x)+{ 2}\kappa_n^{-1}\Omega_2(x)$. Notice that $f(x)>0$ implies $\Omega_2>0$ and hence $\sigma^2>0$.
	It will be shown
	\begin{align*}
	\frac{1}{\sqrt{n} \sigma }\sumi V_i(\theta) \stackrel{d}{\to} N(0,1).
	\end{align*}
	To see this, note that
	\begin{align*}
	\frac{1}{n  }\sumi V_i(\theta)=\frac{1}{n  }\sumi\left(n S(\theta)-(n-1)S^{(i)}(\theta)\right)=S(\theta)
	\end{align*}
	using (\ref{eq:leave-one-out_trick}).
	Now, following the decomposition of Equation (\ref{eq:Hoeffding_decomposition}), we have
	\begin{align*}
	 S(\theta)
	=&\frac{1}{n}\sumi L_i +\frac{2}{n(n-1)}\sum_\ijI \left(W_{ij} +R_{ij}\right) + \text{Bias}(\hat \theta).
	\end{align*}
	Notice that the $L_i$, $W_{ij}$ and $R_{ij}$ terms above correspond to $T_2$, $T_3$ and $T_1$ in \citet[pp. 9]{GrahamNiuPowell2019} respectively. { Here the bias has an order of $h_n^2$. Thus, following the variance calculation in Section 3 in \cite{GrahamNiuPowell2019}, under 
nondegeneracy
\begin{align*}
 S(\theta)
	=&\frac{(1+o_P(1))}{n}\sumi L_i + \text{Bias}(\hat \theta)= O_P\left(\frac{1}{\sqrt{n}} \right)+ O\left(h_n^2\right)=O_P\left(\frac{1}{\sqrt{n}} \right),
\end{align*}	
and under degeneracy
	\begin{align*}
	 S(\theta)
	=&\frac{2}{n(n-1)}\sum_\ijI \left(W_{ij} +R_{ij}\right) + \text{Bias}(\hat \theta)=O_P\left(\frac{1}{\sqrt{n^2 h_n}} \right)+ O\left(h_n^2\right)=O_P\left(\frac{1}{\sqrt{n^2h_n}} \right).
	\end{align*}
	That is, the bias is negligible in both scenarios as $nh_n^{5/2}\to 0$.}
	Then, it follows from Section 4 in \citet[pp. 15 and 24]{GrahamNiuPowell2019} that
	\begin{align*}
	\frac{1}{\sqrt{n}\sigma}\sumi V_i(\theta) \stackrel{d}{\to} N(0,1).
	\end{align*}
	To investigate the behavior of $n^{-1}\sigma^{-2}\sumi V_i^2(\theta)$, note that following Lemma \ref{lem:V2}, under non-degeneracy, we have
	\begin{align*}
	\frac{1}{n\sigma^2}\sumi V_i(\theta)^2= \frac{4\Omega_1(x)}{\sigma^2}+o_P(1)\to 1,
	\end{align*}
	and under degeneracy,
	\begin{align*}
	\frac{1}{n\sigma^2}\sumi V_i(\theta)^2=\frac{1}{\sigma^2} { 4} \kappa_n^{-1 }\Omega_2(x)+o_P(1)\not\to 1.
	\end{align*}
	We now consider modified JEL. Notice that
	\begin{align*}
	\frac{1}{n\sigma^2}\sumi V_i^m(\theta)^2=\sigma^{-2}\left\{4\Omega_1(x)+ { 4}\kappa_n^{-1 }\Omega_2(x)\right\}+o_P(1),
	\end{align*}
	following a similar argument as above as well as the consistency of $\hat \theta$. It now suffices to show that
		\begin{align*}
	\frac{1}{\sqrt{n}\sigma}\sumi V_i^m(\theta)\stackrel{d}{\to} N\left(0, \lim_{n\to \infty}\sigma^{-2}\left\{4\Omega_1(x)+ { 4}\kappa_n^{-1 }\Omega_2(x)\right\}\right).
	\end{align*}
	As $\sumi V_i(\hat\theta)=0$, it holds that
	$
n^{-1/2}\sumi V_i^m(\theta)=\Gamma\Gamma_m^{-1} n^{-1/2}\sumi V_i(\theta)
	$.
	We have from above, $n^{-1/2}\sigma^{-1}\sumi V_i(\theta)\stackrel{d}{\to } N(0,1)$. Now, notice that
	\begin{align*}
\Gamma\Gamma_m^{-1}=\left\{\frac{n^{-1}\sigma^{-2}\sumi V_i(\hat\theta)^2}{n^{-1}\sigma^{-2}\sumi V_i(\hat\theta)^2-n^{-1}\sigma^{-2}\sum_{i=1}^{n-1} \sum_{j= i+1}^nQ_{ij}^2}\right\}^{1/2}.
	\end{align*}
	By consistency of $\hat \theta$ and the convergence of $n^{-1}\sigma^{-2}\sumi V_i(\theta)^2$, we have $n^{-1}\sigma^{-2}\sumi V_i(\hat\theta)^2\stackrel{P}{\to } \sigma^{-2}\{4\Omega_1(x) + { 4}\kappa_n^{-1 }\Omega_2(x)\}+o_P(1)$.
	We now verify that
	\begin{align*}
	\frac{1}{n\sigma^{2}}\sum_{i=1}^{n-1} \sum_{j= i+1}^nQ_{ij}^2=  \frac{ { 2}  }{\sigma^2}\kappa_n^{-1}\Omega_2(x)+o_P(1).
	\end{align*}
	To see this, observe that
	\begin{align*}
	(n-1)S^{(i)}(\theta)=&\frac{2}{(n-2)}\Bigg(\sum_{k\ne i} L_k +\sum_{k=1}^{n-1}\sum_{l=i+1}^n (W_{kl}+R_{kl})-\sum_{l= i+1}^n(W_{il}+R_{il})-\sum_{k =1}^{i-1} (W_{ki}+R_{ki})\Bigg)\\
	&+\text{Bias}(\hat \theta^{(i)}),\\
	(n-2)S^{(i,j)}(\theta) =&\frac{2}{(n-3)}\Bigg(\sum_{k\ne i,j} L_k+\sum_{k=1}^{n-1}\sum_{l=i+1}^n (W_{kl}+R_{kl})- \sum_{l=j+1}^n (W_{jl}+R_{jl}) -\sum_{k=1}^{j-1}(W_{kj}+R_{kj})\\
	 &\qquad\qquad\qquad -\sum_{l = i+1}^{n}(W_{il}+R_{il})-\sum_{k=1}^{i-1} (W_{ki}+R_{ki})  +(W_{ij}+R_{ij})\Bigg)+\text{Bias}(\hat \theta^{(i,j)}).
	\end{align*}
	Observe that $\text{Bias}(\hat \theta)=\text{Bias}(\hat \theta^{(i)})=\text{Bias}(\hat \theta^{(j)})=\text{Bias}(\hat \theta^{(i,j)})$ as they are estimated using the same bandwidth $h_n$.
	Hence, as the $W_{ij}$ terms are of smaller order asymptotically, we have
	\begin{align*}
	Q_{ij}=&\frac{1}{n-1}\left(\frac{2}{(n-1)(n-2)}\sum_{k=1}^{n-1}\sum_{l=k+1}^nR_{kl}-\frac{2}{n-2}\left(\sum_{l> i}R_{il}+\sum_{k< i}R_{ki}+\sum_{l> j}R_{jl}+\sum_{k< j}R_{kj}\right)
	 +2R_{ij }\right) \\
	 &\times (1+o_P(1)).
	\end{align*}
	Therefore, by the conditional independence of $(R_{ij})_{(i,j)\in\Ink}$, the WLLN for triangular arrays,  and the law of total variance,
	\begin{align*}
		\frac{1}{n\sigma^2}\sum_{i=1}^{n-1} \sum_{j= i+1}^nQ_{ij}^2	 =&\frac{4}{n (n-1)^2\sigma^2}\sum_{i=1}^{n-1}\sum_{j=i+1}^nR_{ij }^2 +o_P(1)\\
		  =&{ \frac{2}{n\sigma^2}\Var(R_{12}) +o_P(1)=\frac{2}{\sigma^2}\kappa_n^{-1}\Omega_2(x)+o_P(1).}
	\end{align*}
	This completes the proof.
\end{proof}

\subsection{Proof of Theorem \ref{thm:asymptotic_dist_incomplete}}
\begin{proof}\label{proof:asymptotic_dist_incomplete}
	In this proof, let us abbreviate $X_{\Ink}=(X_{ij})_{ \ijI}$. We will drop the $n$ subscript of $p$ with the understanding that it can potentially depend on $n$.
	
	 Define $\gamma=\lim_{n\to\infty}p/(1-p)$. 
	Denote $N=pn(n-1)/2$, $N_1=p(n-1)(n-2)/2$ and $N_2=(n-2)(n-3)/2$. First notice that $\hat p/p\stackrel{P}{\to} 1$ at the rate of $\{n(n-1)\}^{-1/2}$. Hence $\hat N/N\stackrel{P}{\to}1$, $\hat N_1/N_1\stackrel{P}{\to}1$ and $\hat N_2/N_2\stackrel{P}{\to}1$ at the same fast rate.
	Thus if we define
	\begin{align*}
	&\tilde S(\theta)=\frac{1}{ N}\sum_\ijI \frac{1}{h_n}Z_{ij }K\left(\frac{x-\Xij}{h_n}\right)-\theta,\quad \tilde S^{(k)}(\theta)=\frac{1}{ N_1}\sum_{\ijI^{(k)}}\frac{1}{h_n}Z_{ij}K\left(\frac{x-\Xij}{h_n}\right)-\theta,\\
&	\tilde S^{(k,l)}(\theta)=\frac{1}{ N_2}\sum_{\ijI^{(k,l)}}\frac{1}{h_n}Z_{ij} K\left(\frac{x-\Xij}{h_n}\right)-\theta,
	\end{align*}
	then we have
$
	\tilde S(\theta)=\hat S(\theta)+O_P(n^{-1}),
$
	and similar results hold uniformly for $\hat S^{(k)}(\theta)$ and $\hat S^{(k,l)}(\theta)$ over all $k,l$.
Denote 
\begin{align*}
&\tilde V_i(\theta):=n\tilde S(\theta)-(n-1) \tilde S^{(i)}(\theta),\\
&\tilde V_i^m(\theta):=\tilde V_i(\hat \theta_{\text{inc}})-\hat \Gamma\hat \Gamma_m^{-1}\{\tilde V_i(\hat \theta_{\text{inc}})-\tilde V_i(\theta)\}.
\end{align*}
	Using the relationship between $\hat S$ and $\tilde S$ from above, and the facts
	$
	n^{-1}\sumi \tilde S^{(i)}(\theta)=\tilde S(\theta)
	$ and $
	n^{-1}\sumi \hat S^{(i)}(\theta)=\hat S(\theta)
	$ that follow Equation (\ref{eq:leave-one-out_trick}) with $Z_{ij}K_{ij,n}$ in place of $K_{ij,n}$, we have
	\begin{align*}
	\frac{1}{\sqrt{n}}\sumi \tilde V_i(\theta)=\sqrt{n} \tilde S(\theta)=\sqrt{n}\hat S(\theta)+O_p(n^{-1/2})=	\frac{1}{\sqrt{n}}\sumi \hat V_i(\theta)+O_p(n^{-1/2}).
	\end{align*}
	\\
	\par	
	Now, the same argument from the first part of the Proof of Theorem \ref{thm:asymptotic_dist} holds for $\tilde V_i^m(\theta)$ in place of $V_i(\theta)$ and thus we have
	\begin{align*}
	2\sumi \log(1+\hat \lambda \tilde V_i^m(\theta) )=\frac{\left\{\frac{1}{n}\sumi \tilde V_i^m(\theta)\right\}^2}{\frac{1}{n}\sumi \tilde V_i^m(\theta)^2} +\op(1).
	\end{align*}
	Denote 
	$
\tilde \sigma^2=\sigma^2+\gamma^{-1}{ 2}\kappa_n^{-1}\Omega_2(x)
$,
	where $\sigma^2=4\Omega_1(x)+ { 2}\kappa_n^{-1 }\Omega_2(x)$ and $\kappa_n=nh_n$ are the same as in the proof of Theorem \ref{thm:asymptotic_dist}.
	To this end, it suffices to verify that for $\omega^2:=\lim_{n\to \infty}\tilde \sigma^{-2}\{4\Omega_1(x) +(1+\gamma^{-1}){ 2}\kappa_n^{-1}\Omega_2(x)\}$, it holds that
	\begin{align*}
	\frac{1}{\sqrt{n}\tilde\sigma}\sumi \tilde V_i^m(\theta) \stackrel{d}{\to} N(0,\omega^2),\quad 
	\frac{1}{n\tilde \sigma^2}\sumi \tilde V_i^m(\theta)^2 \stackrel{P}{\to} \omega^2.
	\end{align*}
	Let us start with $\tilde V_i$, notice that we can decompose	
	\begin{align*}
	\frac{1}{\sqrt{n}\tilde \sigma}\sumi \tilde V_i(\theta)
	=&\sqrt{n}\tilde \sigma^{-1}\tilde S(\theta)=\frac{\sqrt{n}}{ N\tilde \sigma}\sum_\ijI \frac{1}{h_n}Z_{ij }K\left(\frac{x-\Xij}{h_n}\right)-\theta+\frac{\sqrt{n}}{N\tilde \sigma}\sum_\ijI(Z_{ij}-p)K_{ij,n}\\
	=&\sqrt{n}\tilde \sigma^{-1}\left\{S(\theta) + T(\theta)\right\},
	\end{align*}
	where $S(\theta)$ is the same as in the complete dyadic data case and 
	$
	T(\theta)=N^{-1}\sum_\ijI(Z_{ij}-p)K_{ij,n}
	$.
	From the proof of Theorem \ref{thm:asymptotic_dist}, we have $\sqrt{n}\sigma^{-1}S(\theta)\stackrel{d}{\to} N(0,1)$. Now, conditional on $ X_{\Ink}$, $\sqrt{n}\tilde \sigma^{-1} T(\theta)$ consists of a sum of independent random variable with mean zero and conditional variance
	\begin{align*}
	{ \Var\left(T(\theta) \mid X_{\Ink} \right)=\frac{4(1-p)}{pn^2(n-1)^2}\sum_{\ijI}K_{ij,n}^2.}
	\end{align*}
By the law of total variance, as $E[T(\theta)\mid X_{\Ink} ]=0$, we have
\begin{align*}
\Var(T(\theta))=\frac{(1-p)}{p}\frac{{ 2}}{n^2 h_n }\Omega_2(x)(1+o(1)).
\end{align*}
In addition, one can verify the conditional Lyapunov's condition
\begin{align}
\frac{1}{\{\Var(T(\theta))\}^{3/2}}\E\left[\left|T(\theta)\right|^3\mid X_{\Ink}\right]\lesssim
\frac{1}{\{\Var(T(\theta))\}^{3/2}}\frac{p}{p^3 n^6}\sum_{(i,j)\in \Ink} K_{ij,n}^3.\label{eq:lyapunov}
\end{align}
Notice that as Assumption \ref{a:JEL_incomplete} implies $n^2 h_n p\to \infty$, the right hand side of (\ref{eq:lyapunov}) is of order $o_P(1)$ since
\begin{align*}
\E\left[\left|\frac{1}{\{\Var(T(\theta))\}^{3/2}}\frac{p}{p^3 n^6}\sum_{(i,j)\in \Ink} K_{ij,n}^3\right|\right]\lesssim\frac{1}{n  h_n^{1/2} p^{1/2}}=o(1).
\end{align*}
Thus an application of Lyapunov's CLT yields that with probability $1-o(1)$, it holds that
\begin{align*}
{ \Var( T(\theta) \mid X_{\Ink} )^{-1/2}T(\theta)\stackrel{d|X_{I_n}}{\to} N\left(0,1
\right)
,}
\end{align*} 
where the convergence in distribution is with respect to the law of $(Z_{ij})_{(i,j)\in\Ink}$. The variance calculation above and conditional convergence in distribution then imply unconditional convergence in distribution $\{\Var(T(\theta))\}^{-1/2}T(\theta)\stackrel{d}{\to}N(0,1)$. 
Hence, an application of Theorem 2 in \cite{chen2007asymptotic} together with Slutsky's Theorem yields that
	\begin{align*}
	\sqrt{n}\tilde \sigma^{-1}\tilde S(\theta)=\sqrt{n}\tilde \sigma^{-1}\left\{S(\theta) + T(\theta)\right\}\stackrel{d}{\to}N\left(0,\lim_{n\to \infty}\{\tilde \sigma^{-2}\sigma^2+\tilde \sigma^{-2}\gamma^{-1}{ 2}\kappa_n^{-1}\Omega_2(x) \}\right)=N\left(0,1\right).
	\end{align*}
	Now, we claim
	\begin{align*}
	\frac{1}{n\tilde \sigma^2}\sumi \tilde V_i(\theta)^2= \tilde \sigma^{-2}\{4\Omega_1(x) + { 4}\kappa_n^{-1 }\Omega_2(x)\}+\tilde \sigma^{-2}\gamma^{-1}{ 2}\kappa_n^{-1}\Omega_2(x)+o_P(1).
	\end{align*}
	Define $T^{(i)}(\theta)=N_1^{-1}\sum_{(k,l)\in \Ink^{(i)}}(Z_{kl}-p)K_{kl,n}$, the leave-one-out version of $T$.
	Note that $\tilde V_i(\theta)=\tilde S(\theta)-(n-1)(\tilde S^{(i)}(\theta)-\tilde S(\theta))$. Then
	\begin{align*}
	\frac{1}{n\tilde \sigma^2}\sumi \tilde V_i^2(\theta)=&\tilde \sigma^{-2}\left(\tilde S^2(\theta) - 2(n-1) \tilde  S(\theta)\frac{1}{n} \sumi \{\tilde  S^{(i)}(\theta)-\tilde  S(\theta)\} +(n-1)^2\frac{1}{n}\sumi\{\tilde  S^{(i)}(\theta)-\tilde  S(\theta)\}^2\right)\\
	=&\tilde \sigma^{-2}\left(M_1-2M_2+M_3\right).
	\end{align*}
	Observe that $M_1=o_P(1)$ following the uniform convergence rate from Theorem \ref{thm:unif_rate} and $M_2=0$ using Equation (\ref{eq:leave-one-out_trick}). Similar to the calculation of term $T_3$ in the Proof of Theorem \ref{thm:asymptotic_dist},
	\begin{align*}
	\tilde \sigma^{-2}M_3=&\frac{(n-1)^2}{n\tilde \sigma^2}\sumi\{\tilde  S^{(i)}(\theta)-\tilde  S(\theta)\}^2\\
	=&\frac{(n-1)^2}{n\tilde \sigma^2}\sumi\left\{\tilde  S^{(i)}(\theta)-\frac{1}{n}\sumi \tilde  S^{(i)}(\theta)\right\}^2\\
	=&\frac{(n-1)^2}{n^2\tilde \sigma^2}\sum_{i=1}^{n-1}\sum_{j=i+1}^n\left\{\tilde  S^{(i)}(\theta)-\tilde  S^{(j)}(\theta)\right\}^2\\
	=&\frac{(n-1)^2}{n^2\tilde \sigma^2}\sum_{i=1}^{n-1}\sum_{j=i+1}^n\left\{ \left(S^{(i)}(\theta)+T^{(i)}(\theta)\right)-\left(S^{(j)}(\theta)+T^{(j)}(\theta)\right)\right\}^2\\
	=&\frac{(n-1)^2}{n^2\tilde \sigma^2}\sum_{i=1}^{n-1}\sum_{j=i+1}^n\left\{ S^{(i)}(\theta)-S^{(j)}(\theta)\right\}^2+\frac{(n-1)^2}{n^2\tilde \sigma^2}\sum_{i=1}^{n-1}\sum_{j=i+1}^n\left\{ T^{(i)}(\theta)-T^{(j)}(\theta)\right\}^2\\
	&+\frac{(n-1)^2}{n^2\tilde \sigma^2}\sum_{i=1}^{n-1}\sum_{j=i+1}^n2\left\{ S^{(i)}(\theta)-S^{(j)}(\theta)\right\}\left\{ T^{(i)}(\theta)-T^{(j)}(\theta)\right\}\\
	=&\tilde \sigma^{-2}\{4\Omega_1(x) + { 4}\kappa_n^{-1 }\Omega_2(x)\}+ { 2}\tilde\sigma^{-2}\gamma^{-1}\kappa_n^{-1}\Omega_2(x)+o_P(1)=\omega^2+o_P(1),
	\end{align*}
	where the convergence stated in the last line is a direct result of the WLLN for triangular arrays and the conditional independence (given $X_{\Ink}$) between $S^{(i)}(\theta)$ and $T^{(i)}(\theta)$, as well as the convergence of the term $T_3$ in the Proof of Theorem \ref{thm:asymptotic_dist}.
The same proof strategy yields that $	n^{-1}\tilde \sigma^{-2}\sumi \tilde V_i^m(\theta)^2 \stackrel{P}{\to} \omega^2$. \\
\par
It remains to show that 
	\begin{align*}
\frac{1}{	\sqrt{n}\tilde\sigma}\sumi \tilde V_i^m(\theta) \stackrel{d}{\to} N(0,\omega^2).
	\end{align*}
	 Note that $\sumi \hat V_i(\hat\theta_{\text{inc}})=0$ and hence
	\begin{align*}
	\frac{1}{\sqrt{n}\tilde\sigma}\sumi \hat V_i^m(\theta)=\left\{\frac{n^{-1}\tilde\sigma^{-2}\sumi \hat V_i(\hat\theta_{\text{inc}})^2}{n^{-1}\tilde\sigma^{-2}\sumi \hat V_i(\hat\theta_{\text{inc}})^2-n^{-1}\tilde\sigma^{-2}\sum_{i=1}^{n-1} \sum_{j= i+1}^n\hat Q_{ij}^2}\right\}^{1/2}\cdot
	\frac{1}{\sqrt{n}\tilde\sigma}\sumi \hat V_i(\theta).
	\end{align*}
	We have from above, $n^{-1/2}\tilde \sigma^{-1}\sumi\hat V_i(\theta)\stackrel{d}{\to } N(0,1)$ following $n^{-1/2}\tilde \sigma^{-1}\sumi\tilde V_i(\theta)\stackrel{d}{\to } N(0,1)$. 
	By consistency of $\hat \theta_{\text{inc}}$, we have $n^{-1}\tilde \sigma^{-2}\sumi \hat V_i(\hat\theta_{\text{inc}})^2\stackrel{P}{\to } \omega^2$. It suffices to show that the denominator in the square root converges to unity asymptotically. Some calculations yield that
	\begin{align*}
	\hat Q_{ij}=&\frac{1}{p(n-1)}\Bigg\{\frac{2}{(n-1)(n-2)}\sum_{k=1}^{n-1}\sum_{l> k}Z_{kl}R_{kl}-\frac{2}{n-2}\left(\sum_{l\ne  i}Z_{il}R_{il}+\sum_{l\ne j}Z_{jl}R_{jl}\right)\\
	&\qquad\qquad+2Z_{ij}R_{ij }\Bigg\}(1 +o_P(1)).
	\end{align*}
	Therefore, by the conditional independence (given $(U_i)_{i=1}^n$) of $(Z_{ij}R_{ij})_{(i,j)\in\Ink}$, using the WLLN for triangular arrays, the law of total variance, and following the same arguments as in the proof of Theorem \ref{thm:asymptotic_dist}, we have $
n^{-1}\tilde\sigma^{-2}\sum_{i=1}^{n-1} \sum_{j= i+1}^{n}\hat Q_{ij}^2= \tilde \sigma^{-2}{ 2}\kappa_n^{-1}\Omega_2(x)+o_P(1)
$.
This concludes the proof for the modified JEL. 

The proof for the JEL follows from similar arguments and an observation of 
\begin{align*}
\tilde \sigma^2=4\Omega_1(x)+ \kappa_n^{-1 }\Omega_2(x)+\gamma^{-1}\kappa_n^{-1}\Omega_2(x)=
\begin{cases}
4\Omega_1(x)+o(1), &\text{ if $f$ is non-degenerate at $x$,}\\ 
\frac{{ 2}}{nh_n p}\Omega_2(x)(1+o(1)),& \text{ if $f$ is degenerate at $x$ and $p\to0$.}
\end{cases}
\end{align*}
\end{proof}

\newpage
\begin{table}
	\begin{center}
		\begin{tabular}{cccccc}
			\hline
			\hline
			\multicolumn{6}{c}{dyadic DGP  ($\beta=1$)}\\
			\hline
			\hline
			kernel: &\multicolumn{5}{c}{Epanechnikov}\\
			bandwidth: &\multicolumn{5}{c}{$h_n^{S}$}\\
			\hline
			method:&FG&lFG&JEL&mJK&mJEL\\
			\hline
			n=25& 0.863 & 0.986 & 0.985 & 0.919 &  0.933 \\
			n=50& 0.906 & 0.981 & 0.982 & 0.927 &  0.937 \\
			n=75& 0.927 & 0.984 & 0.985 & 0.941 & 0.947 \\
			\hline
			n=100& 0.929 & 0.983 & 0.984 &  0.941 & 0.948 \\
			n=150& 0.926 & 0.981 & 0.981 & 0.936 & 0.944 \\
			n=200& 0.939 & 0.979 & 0.980 &  0.942 & 0.947 \\
			\hline
			n=400& 0.936 & 0.970 & 0.971 & 0.938 & 0.942 \\
			n=600& 0.944 & 0.971 & 0.971 & 0.945 & 0.948 \\
			n=800& 0.945 & 0.971 & 0.971 & 0.946 & 0.948 \\
			\hline
		\end{tabular}
	\end{center}
	\caption{Coverage probabilities of $95$\% confidence intervals under non-degenerate DGP}
	\label{table:dyadic_cov_prob}
\end{table}

\begin{table}
	\begin{center}
		\begin{tabular}{cccccc}
			\hline
			\hline
			\multicolumn{6}{c}{i.i.d. DGP ($\beta=0$)}\\
			\hline
			\hline
			kernel: &\multicolumn{5}{c}{Epanechnikov}\\
			bandwidth: &\multicolumn{5}{c}{$h_n^{S}$}\\
			\hline
			method:&FG&lFG&JEL&mJK&mJEL\\
			\hline
			n=25& 0.851 & 0.983 & 0.987 & 0.920 & 0.937 \\
			n=50& 0.905 & 0.991 & 0.991 & 0.936 & 0.952 \\
			n=75& 0.927 & 0.994 & 0.994 & 0.944 & 0.956 \\
			\hline
			n=100& 0.934  & 0.993 & 0.994 & 0.946 & 0.957 \\
			n=150& 0.940 & 0.994 & 0.994 & 0.950 & 0.959 \\
			n=200& 0.945  & 0.993 & 0.994 & 0.951 & 0.961 \\
			\hline
			n=400& 0.943 & 0.992 & 0.992 & 0.944 & 0.954 \\
			n=600& 0.939 & 0.992 & 0.992 & 0.941 & 0.953 \\
			n=800& 0.944 & 0.995 & 0.995 & 0.945 & 0.955 \\
			\hline
		\end{tabular}
	\end{center}
	\caption{Coverage probabilities of $95$\% confidence intervals under degenerate DGP }
	\label{table:iid_cov_prob}
\end{table}

{ \small
	
	\begin{table}
		\begin{center}
		\begin{tabular}{ccc}
			\hline
			\hline
			\multicolumn{3}{c}{incomplete dyadic DGP ($p_n=0.5$ and $\beta=1$)}\\
			\hline
			\hline
			kernel: &\multicolumn{2}{c}{Epanechnikov}\\
			bandwidth: &\multicolumn{2}{c}{$h_n^{S, inc}$}\\
			\hline
			method:& JEL&mJEL\\
	\hline
			n=100&0.988&0.952\\
			n=150&0.986&0.950\\
			n=200&0.984&0.951\\
			\hline
			n=400&0.982&0.951\\
			n=600&0.984&0.952\\
			n=800&0.977&0.953\\
			\hline
		\end{tabular}
	\end{center}
		\caption{Coverage probabilities of $95$\% confidence intervals under incomplete dyadic DGP }
		\label{table:incomplete_dyadic}
	\end{table}
	
	\begin{table}
		\begin{center}
		\begin{tabular}{ccc}
			\hline
			\hline
			\multicolumn{3}{c}{incomplete dyadic DGP ($p_n=0.5$ and $\beta=0$)}\\
			\hline
			\hline
			kernel: &\multicolumn{2}{c}{Epanechnikov}\\
			bandwidth: &\multicolumn{2}{c}{$h_n^{S, inc}$}\\
			\hline
			method:& JEL&mJEL\\
			\hline
			n=100&0.993&0.956\\
			n=150&0.994&0.960\\
			n=200&0.994&0.957\\
			\hline
			n=400&0.997&0.960\\
			n=600&0.994&0.960\\
			n=800&0.994&0.958\\
			\hline
		\end{tabular}
	\end{center}
		\caption{Coverage probabilities of $95$\% confidence intervals under incomplete i.i.d. DGP }
		\label{table:incomplete_iid}
	\end{table}
	
}
\begin{figure}
	\caption{Distribution of summary statistics of taxi time across routes}
	\begin{minipage}{\linewidth}
		\subcaption{Mean}
		\includegraphics[width=\textwidth]{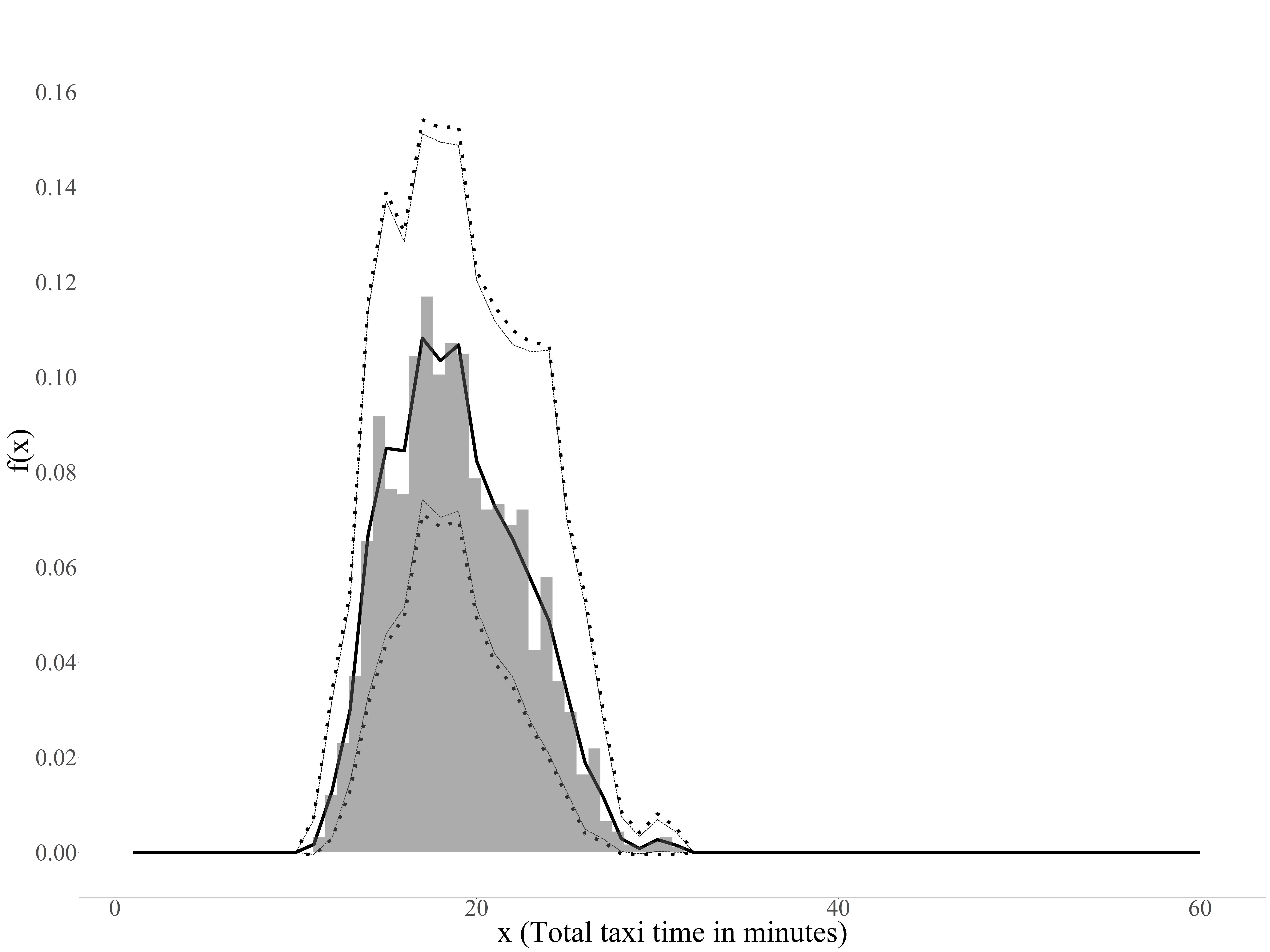}	
				\caption*{\footnotesize Point estimates for the KDE, evaluated at each whole minute, are represented by the solid line. Pointwise 95\% modified JEL confidence intervals are represented by the two dashed lines. Pointwise 95\% JEL confidence intervals are represented by the two dotted lines.}
	\end{minipage}
\end{figure}
\begin{figure}\ContinuedFloat
	\begin{minipage}{\linewidth}
	\subcaption{95th percentile}
	\includegraphics[width=\textwidth]{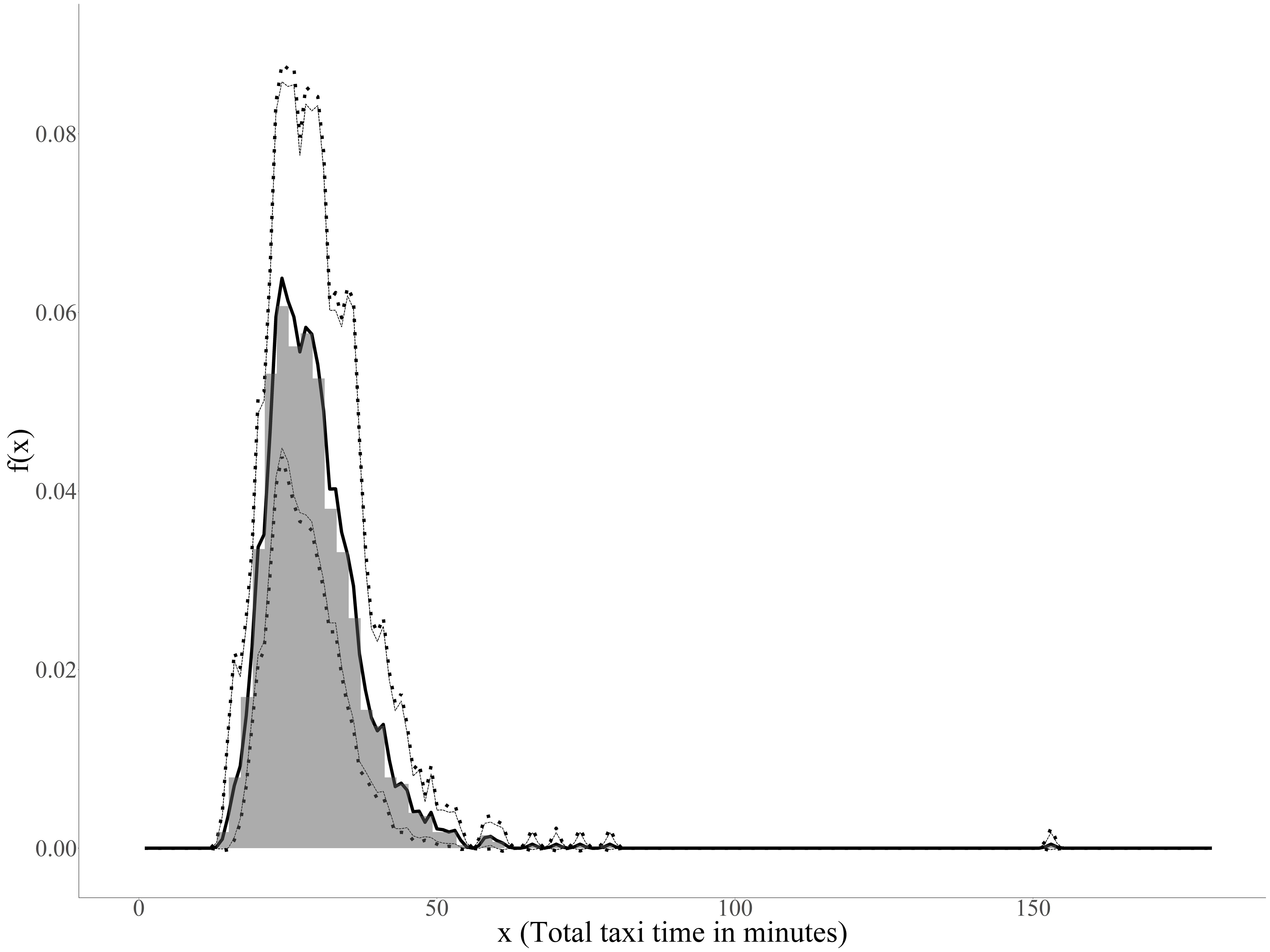}	
				\caption*{\footnotesize Point estimates for the KDE, evaluated at each whole minute, are represented by the solid line. Pointwise 95\% modified JEL confidence intervals are represented by the two dashed lines. Pointwise 95\% JEL confidence intervals are represented by the two dotted lines.}
\end{minipage}
\end{figure}
\begin{figure}\ContinuedFloat
\begin{minipage}{\linewidth}
	\subcaption{Maximum}
	\includegraphics[width=\textwidth]{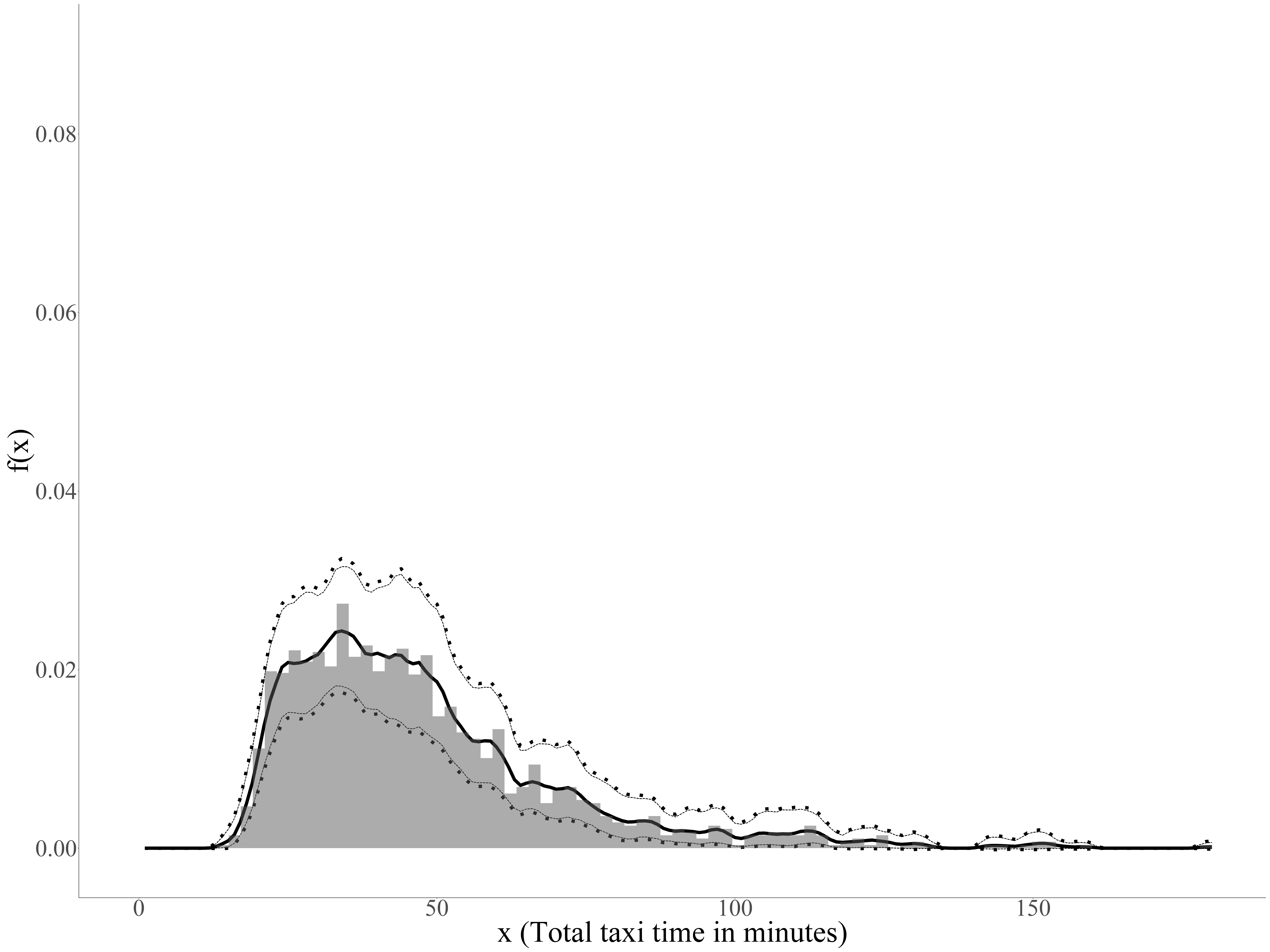}	
			\caption*{\footnotesize Point estimates for the KDE, evaluated at each whole minute, are represented by the solid line. Pointwise 95\% modified JEL confidence intervals are represented by the two dashed lines. Pointwise 95\% JEL confidence intervals are represented by the two dotted lines.}
\end{minipage}
\end{figure}
\bibliographystyle{ecta}
\bibliography{biblio}

\end{document}